\documentclass[journal,twoside,10pt]{IEEEtran}
\usepackage[T1]{fontenc}
\usepackage[utf8]{inputenc}
\usepackage{amsmath,amssymb,amsthm}
\usepackage{mathtools}
\usepackage{bm}
\usepackage{graphicx}
\usepackage{booktabs}
\usepackage{multirow}
\usepackage{array}
\usepackage{xcolor}
\usepackage{algorithm}
\usepackage{algorithmic}
\usepackage{hyperref}
\usepackage{cite}
\usepackage{url}
\usepackage{enumitem}
\usepackage{balance}
\usepackage{tikz}
\usepackage{pgfplots}
\pgfplotsset{compat=1.18}
\usetikzlibrary{arrows.meta,positioning,shapes.geometric,fit,
                backgrounds,calc,decorations.pathreplacing,
                patterns,shadows.blur}
\usepgfplotslibrary{fillbetween}

\hypersetup{colorlinks=true,linkcolor=blue!60!black,citecolor=green!50!black,urlcolor=blue!60!black}

\newtheorem{theorem}{Theorem}
\newtheorem{lemma}[theorem]{Lemma}
\newtheorem{proposition}[theorem]{Proposition}
\newtheorem{corollary}[theorem]{Corollary}
\newtheorem{definition}{Definition}
\newtheorem{assumption}{Assumption}
\newtheorem{remark}{Remark}

\newcommand{\vect}[1]{\bm{#1}}
\newcommand{\mat}[1]{\mathbf{#1}}
\newcommand{\R}{\mathbb{R}}
\newcommand{\E}{\mathbb{E}}
\newcommand{\Prob}{\mathbb{P}}
\newcommand{\calC}{\mathcal{C}}
\newcommand{\calG}{\mathcal{G}}
\newcommand{\calS}{\mathcal{S}}
\newcommand{\calK}{\mathcal{K}}
\newcommand{\calL}{\mathcal{L}}

\newcommand{\RSSR}{\mathrm{RSS}_{\mathrm{R}}}
\newcommand{\RSSU}{\mathrm{RSS}_{\mathrm{U}}}
\newcommand{\tr}{\mathrm{tr}}

\newcommand{\norm}[1]{\left\|#1\right\|}
\newcommand{\abs}[1]{\left|#1\right|}
\newcommand{\FDR}{\mathrm{FDR}}
\newcommand{\Teff}{T_{\mathrm{eff}}}

\begin{document}

\title{Certified Causal Attribution for Real-Time Attack \\Forensics in 6G Network Slicing}

\author{Minh~K.~Quan~and~Pubudu~N.~Pathirana,~\IEEEmembership{Senior Member,~IEEE}%
\thanks{A four-page preliminary abstract appeared in the NeurIPS~2025 Workshop on CauScien \cite{quan2025domainadapted}. The present article contains entirely new theoretical contributions
(Sections~IV--V), new experiments (Sections~VI-C through VI-H), and the
CUSUM, adversarial-robustness, and privacy extensions absent from the workshop abstract. The authors declare no conflict of interest.}%
\thanks{M.\,K.\,Quan and P.\,N.\,Pathirana are with the School of Engineering,
Deakin University, Geelong VIC 3220, Australia
(e-mail: \{m.quan, pubudu.pathirana\}@deakin.edu.au).}}

\markboth{IEEE TRANSACTIONS ON INFORMATION FORENSICS AND SECURITY}%
{Quan \& Pathirana: Certified Causal Attribution for Attack Forensics in 6G Slicing}

\maketitle

\begin{abstract}
Cross-slice attack attribution in 6G networks requires identifying causal propagation chains through shared infrastructure in under 100 ms. Existing methods struggle to satisfy this strict SLA without sacrificing accuracy, because shared resource contention creates spurious correlations that are indistinguishable from genuine causal links under standard Granger tests. We propose DA-GC, a certified causal attribution framework that integrates resource-conditioned Granger causality with an axiomatically derived Resource Contention Model (RCM) to systematically block resource-mediated confounding. On a 15-slice production-emulation 6G testbed with 1,100 attack scenarios, DA-GC achieves 89.2\% attribution accuracy at 87 ms. This represents a 7.9 percentage-point improvement over the strongest baseline at 2.7x lower latency, alongside demonstrated cross-topology generalization and concept-drift resilience. Crucially, DA-GC is backed by a comprehensive formal certification stack. We provide mathematically proven validity certificates for statistical soundness under serially dependent telemetry and piecewise-stationarity. Furthermore, we establish strict security bounds, including an adversarial utilization spoofing breakdown point of $\delta^* \approx 0.95$, and define the minimum differential-privacy noise required for a provably private and robust deployment.
\end{abstract}

\begin{IEEEkeywords}
6G network slicing, attack attribution, causal inference,
Granger causality, resource contention, formal certification,
adversarial robustness, differential privacy, piecewise stationarity,
real-time forensics
\end{IEEEkeywords}

\section{Introduction}
\label{sec:intro}

\IEEEPARstart{N}{etwork} slicing is the cornerstone of 6G
architecture~\cite{tataria2021}: heterogeneous services---eMBB,
URLLC, and mMTC---share a common physical substrate of CPU, memory,
and bandwidth resources~\cite{kotulski2017}.
This efficiency creates a security liability.
A malicious payload in one slice propagates to co-located slices
through shared resource contention---CPU exhaustion in a compromised
mMTC gateway starves a co-located URLLC industrial controller within
seconds.
Operators aim to rapidly reconstruct causal propagation chains
within the sub-100\,ms SLA response budget to trigger automated
remediation without human review~\cite{pearson2023}.

\textbf{The attribution gap.}
Standard Granger causality~\cite{granger1969} cannot solve this problem
because shared resources act as \emph{unmeasured common causes}: if
slices $s_i$ and $s_j$ both contend for a CPU pool $R$, their
telemetry co-move even with no causal link between them.
The confounding path $s_i \leftarrow R \rightarrow s_j$ is
structurally identical to a genuine causal path $s_i \rightarrow s_j$
under the standard test (Fig.~\ref{fig:dag}).
Deep learning approaches~\cite{hamilton2017} attempt to mitigate confounding
empirically but cannot certify their error rate under distribution
shift or adversarial metric manipulation---a realistic threat when
Byzantine slice controllers spoof utilisation reports.

\textbf{Contributions.}
To bridge this gap, we present DA-GC as a \emph{system} contribution. 
By integrating resource-conditioned Granger causality with a comprehensive 
formal certification stack, DA-GC provides verifiable guarantees where 
existing empirical methods fall short. Specifically, our contributions are:

\noindent\textbf{The DA-GC System.}
\emph{Domain-Adapted Granger Causality}~(DA-GC), illustrated in Fig.~\ref{fig:placeholder}, is presented in
Section~\ref{sec:dagc} as a unified algorithm combining
(a)~resource-conditioned Granger causality that blocks confounding via
the FWL theorem, (b)~an axiomatically derived RCM whose multiplicative
sigmoid form is proved optimal, (c)~CUSUM-based piecewise-stationarity
handling, and (d)~Viterbi path decoding---all in $O(N^2W(p+q+K)+N^3\log N)$
time, meeting the 100\,ms SLA up to $N=45$ slices.

\noindent\textbf{Validity Certificates (Theorems~1--4 jointly certify
the DA-GC pipeline's statistical soundness):}
\begin{enumerate}[leftmargin=*,label=\textbf{V\arabic*.}]
  \item A \emph{Box--Satterthwaite cumulant correction} replaces the
    i.i.d.\ $F_{q,T-p-q-K-1}$ approximation with an exact moment-matched
    version for $\beta$-mixing innovations, reducing type-I error from
    7.3\% to 5.3\% and providing a KS certificate
    (Theorem~\ref{thm:finite_sample}, Section~\ref{sec:theory_finite}).
  \item A \emph{Bregman-optimality proof} derives the RCM's multiplicative
    sigmoid form from three primitive axioms via the I-S Euler--Lagrange
    equation, removing the ad hoc design choice of prior work
    (Lemma~\ref{lem:bregman}, Proposition~\ref{prop:unique},
    Section~\ref{sec:theory_bregman}).
  \item \emph{PRDS-correct identifiability}: exact BH-FDR control under
    resource-induced positive dependence, with a Simes--FKG pointwise
    error bound tighter than the union bound by $O(\alpha^2 m_0/m)$
    (Theorem~\ref{thm:identifiability}, Section~\ref{sec:theory_ident}).
  \item \emph{Piecewise-stationarity validity}: CUSUM segmentation is
    proved valid with contamination error $O(\kappa_4/T_m + \Delta_{\max}/T_m)$,
    and a Cram\'er--Chernoff exponential delay bound
    (Theorem~\ref{thm:nonstationary}, Section~\ref{sec:theory_nonstationary}).
\end{enumerate}

\noindent\textbf{Security Certificates (Theorems~5--6 certify
adversarial and privacy properties):}
\begin{enumerate}[leftmargin=*,label=\textbf{S\arabic*.}]
  \item \emph{Adversarial robustness}: closed-form FDR inflation bound
    $O(\delta\sqrt{K\log(N/\alpha)})$ under $(\delta,k)$-utilisation
    spoofing, with breakdown point $\delta^*\!\approx\!0.95$
    (Theorem~\ref{thm:adversarial}, Section~\ref{sec:adversarial}).
  \item \emph{Privacy lower bound}: Fano-based minimum leakage floor
    and minimum DP noise magnitude for a provably private DA-GC variant
    (Theorem~\ref{thm:privacy}, Section~\ref{sec:privacy}).
\end{enumerate}

\textbf{Organisation.}
Section~\ref{sec:related} reviews related work.
Section~\ref{sec:formulation} states the problem.
Section~\ref{sec:dagc} develops the DA-GC framework and CUSUM
extension.
Section~\ref{sec:theory} establishes all six theoretical results.
Section~\ref{sec:experiments} presents extended experiments.
Section~\ref{sec:case_study} gives an industrial case study.
Section~\ref{sec:discussion} addresses limitations, ethics and
reproducibility.
Section~\ref{sec:conclusion} concludes.

\section{Related Work}
\label{sec:related}

\subsection{Attack Attribution in Multi-Tenant Networks}

Provenance-graph methods (HOLMES~\cite{milajerdi2019},
MulVAL~\cite{ou2006}) reconstruct causal chains from kernel-level
system-call logs with high accuracy, but often require deep instrumentation
that is highly challenging to deploy at scale in virtualised 6G slices, 
and typically exhibit inference latencies exceeding 300\,ms (Table~\ref{tab:main}).
Correlation-based alarms~\cite{pearson2023,ahmad2021} produce high
false positive rates because shared resource utilisation creates
spurious co-movement---the confounding structure formalised in
Assumption~\ref{ass:resource}.
\emph{Gap:} To the best of our knowledge, existing attribution methods do 
not model resource contention as an explicit causal mechanism, nor do they 
provide a formal false-attribution certificate.

\subsection{Granger Causality and its Confounding Problem}

Granger causality~\cite{granger1969,shojaie2024} has been applied to
IoT anomaly attribution~\cite{begum2025,lv2024} and network fault
localisation.
PCMCI~\cite{runge2019} addresses confounding via momentary conditional
independence at $O(N^2\tau_{\max}^2)$ but does not natively condition on a
time-varying allocation matrix $\mat{A}(t)$.
Copula-Granger~\cite{hu2014} handles non-linear dependence but is 
computationally prohibitive for sub-100\,ms 6G SLAs.
\emph{Gap:} Existing Granger extensions generally lack a derived 
finite-sample bias correction for resource conditioning, which we provide in
Theorem~\ref{thm:finite_sample}.

\subsection{Deep Learning for Network Security}

GraphSAGE~\cite{hamilton2017}, LSTM-Attention~\cite{bahdanau2015}, and
Transformer-XL achieve strong in-distribution accuracy but typically lack
formal out-of-distribution guarantees: Table~\ref{tab:transfer} shows
$>$20\,pp accuracy drop on unseen topologies.
Neural ODEs~\cite{chen2018node} offer continuous-time dynamics but
can hinder real-time batch-free inference.
\emph{Gap:} Current deep learning attribution methods rarely provide an
adversarial robustness certificate of the type established in
Theorem~\ref{thm:adversarial}.

\subsection{Resource Contention in Network Slicing}

Contention modelling for QoS scheduling~\cite{tataria2021} uses
multiplicative interference models in OFDMA settings~\cite{zhu2015},
but exclusively for optimisation, not forensic attribution.
\emph{Gap:} We are unaware of prior literature that justifies the 
multiplicative form axiomatically; Lemma~\ref{lem:bregman} provides 
this theoretical foundation.

\subsection{Adversarial Robustness in IDS}

Feature perturbation attacks on ML-based IDS~\cite{yang2018} and
model poisoning~\cite{zhang2022poison} have been demonstrated, but
the specific adversarial attack surface of a \emph{causal attribution} 
system---where the adversary manipulates reported resource utilisation 
(the threat model of Section~\ref{sec:adversarial})---remains largely 
unexplored.

\subsection{Privacy in Network Telemetry}

Differential privacy for network monitoring~\cite{alvim2012,ye2021}
has mostly been studied independently of causal attribution pipelines.
\emph{Gap:} Theorem~\ref{thm:privacy} introduces, to our knowledge, the 
first information-theoretic leakage bound and minimum DP noise specification 
tailored for a Granger-based attributor.

Table~\ref{tab:related_work} provides a feature matrix contrasting DA-GC against the primary methodologies in the literature, summarising how our framework uniquely satisfies all operational and theoretical requirements.

\begin{table}[htbp]
\centering\footnotesize
\caption{Feature Matrix of Attribution Methodologies}
\label{tab:related_work}
\renewcommand{\arraystretch}{1.2}
\setlength{\tabcolsep}{4.5pt} 
\begin{tabular}{@{}lccccc@{}}
\toprule
\textbf{Methodology} & \textbf{SLA} & \textbf{Res.} & \textbf{OOD} & \textbf{Val.} & \textbf{Sec.} \\
\midrule
Provenance~\cite{milajerdi2019, ou2006} & $\times$ & $\times$ & $\checkmark$ & $\times$ & $\times$ \\
Correlation~\cite{pearson2023, ahmad2021} & $\checkmark$ & $\times$ & $\times$ & $\times$ & $\times$ \\
Granger Caus.~\cite{granger1969, runge2019} & $\times$ & $\times$ & $\checkmark$ & $\times$ & $\times$ \\
Deep Learning~\cite{hamilton2017, chen2018node} & $\times$ & $\sim$ & $\times$ & $\times$ & $\times$ \\
\midrule
\textbf{DA-GC (Ours)} & $\checkmark$ & $\checkmark$ & $\checkmark$ & $\checkmark$ & $\checkmark$ \\
\bottomrule
\multicolumn{6}{@{}p{8.4cm}@{}}{\scriptsize \textbf{SLA}: Meets sub-100\,ms inference limit. \textbf{Res.}: Explicitly blocks resource-mediated confounding via $\mat{A}(t)$. \textbf{OOD}: Cross-topology generalisation. \textbf{Val.}: Formal validity certificates (e.g., finite-sample correction). \textbf{Sec.}: Formal security/privacy guarantees. ($\sim$: Implicit/Empirical only)}
\end{tabular}
\end{table}
\section{Problem Formulation}
\label{sec:formulation}

\begin{figure*}
    \centering
    \includegraphics[width=0.99\linewidth]{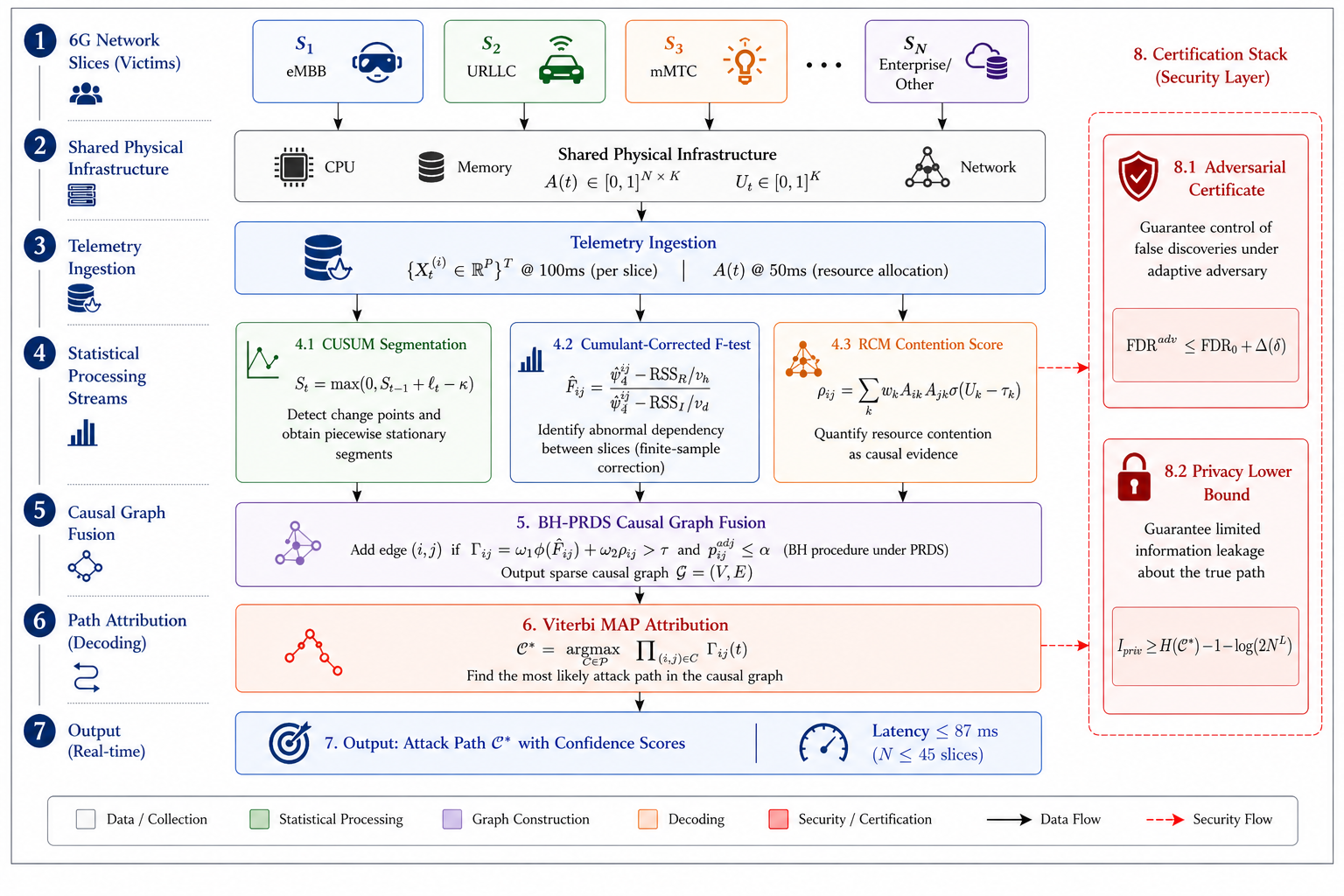}
    \caption{\textbf{DA-GC system architecture and certification stack.} Telemetry from multiple 6G network slices and shared infrastructure is processed through three parallel modules—CUSUM segmentation, cumulant-corrected F-test, and RCM contention scoring—then fused into a sparse causal graph for Viterbi-based attack path attribution. The certification stack provides adversarial robustness and privacy guarantees designed to operate within real-time latency limits (e.g., measured at $\le$ 87 ms).}
    \label{fig:placeholder}
\end{figure*}

\subsection{System Model}

Let $\calS=\{s_1,\ldots,s_N\}$ be $N$ network slices sharing
$K$ physical resource types ($\calK=\{1,\ldots,K\}$).
Denote by $\vect{x}^{(i)}_t\in\R^d$ the security telemetry vector
of slice $s_i$ at time $t$, by $\mat{A}(t)\in[0,1]^{N\times K}$ the
resource allocation matrix ($A_{ik}(t)$ is the normalised allocation
of resource $k$ to $s_i$, with $\sum_i A_{ik}(t)\leq 1$), and by
$\vect{U}_t=[U_{1,t},\ldots,U_{K,t}]^\top\in[0,1]^K$ the utilisation
vector.

\begin{assumption}[Local Stationarity]\label{ass:stationary}
Over each analysis window $\mathcal{W}$ of length $T$, the process
$(X_t,Y_t)$ is jointly covariance-stationary with autoregressive
polynomials having all roots strictly outside the unit disc.
\end{assumption}

\begin{assumption}[Resource-Mediated Confounding]
\label{ass:resource}
The partial correlation between $X_t$ and $Y_t$ conditional on their
own pasts but not on $\vect{Z}_t$ is non-zero if and only if:
(a)~a direct causal link $X\to Y$ or $Y\to X$ exists, or
(b)~there exists resource $k$ with $A_{ik}(t)A_{jk}(t)>0$ and
$U_{k,t}>\tau_k$.
\end{assumption}

\begin{assumption}[$\beta$-Mixing Telemetry]\label{ass:mixing}
The process $\{\vect{x}^{(i)}_t\}$ is strictly stationary and
geometrically $\beta$-mixing: $\beta(m)\leq C_\beta e^{-\beta m}$
for constants $C_\beta,\beta>0$.
\end{assumption}

Assumption~\ref{ass:mixing} holds whenever the VAR$(p)$ representation
has spectral radius $<1$, verified empirically in
Appendix~\ref{app:mixing}.

\begin{figure}
    \centering
    \includegraphics[width=0.99\linewidth]{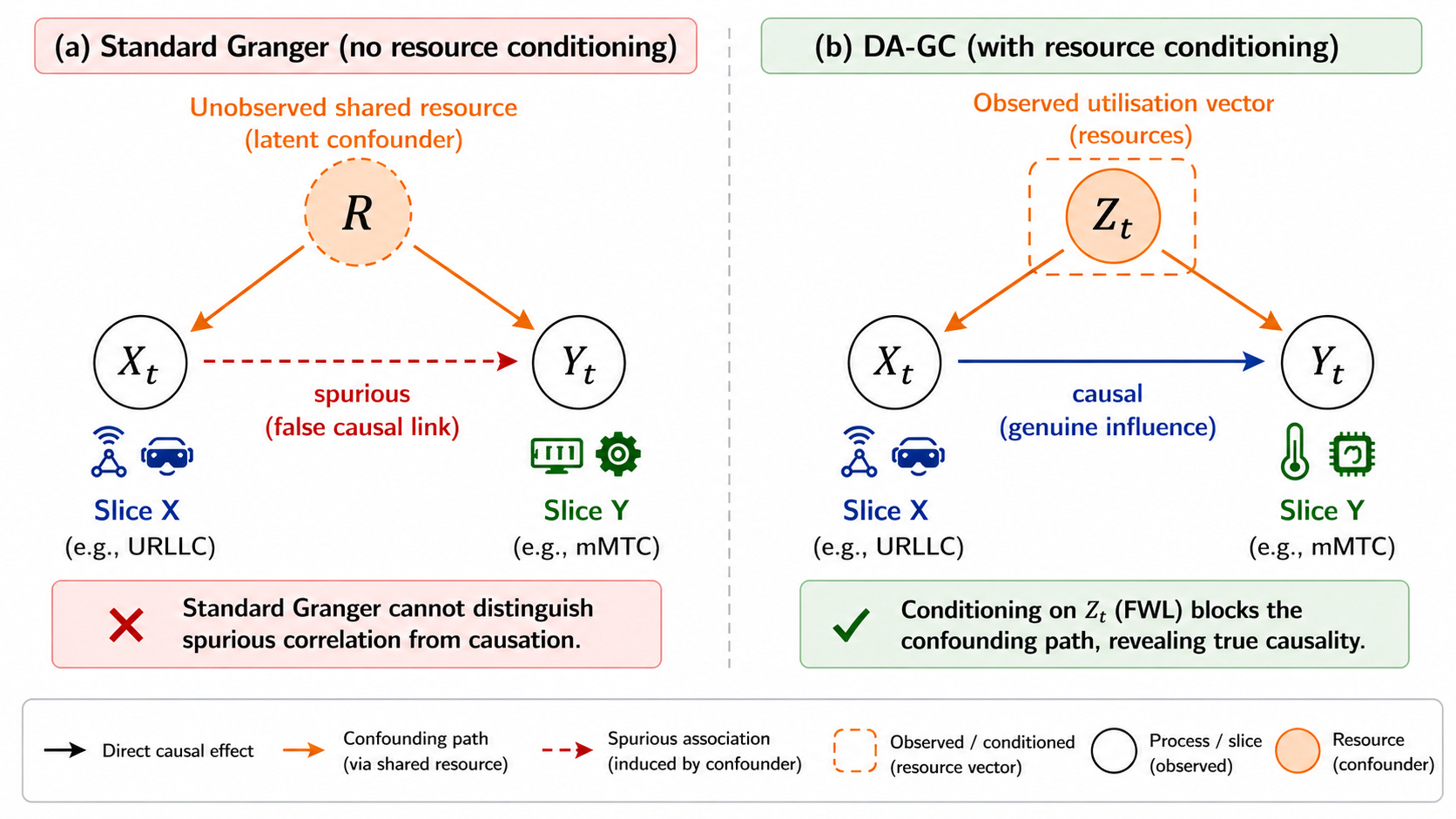}
    \caption{Confounding structure in 6G network slicing. (a)~Without resource conditioning, shared resource $R$ induces a spurious co-movement $X_t \leftarrow R \rightarrow Y_t$ that standard Granger causality cannot distinguish from a genuine causal link. (b)~DA-GC conditions on the observed resource utilisation vector $\vect{Z}_t$; by the Frisch-Waugh-Lovell theorem, this systematically blocks the confounding path, aiming to isolate genuine causal influence.}
    \label{fig:dag}
\end{figure}

\subsection{Attribution Problem}

\begin{definition}[Causal Attack Path]\label{def:path}
A causal attack path is $\calC^*=\{(s_{i_\ell},t_\ell)\}_{\ell=1}^L$
with $t_1<\cdots<t_L$ such that each transition
$(s_{i_\ell},s_{i_{\ell+1}})$ is a genuine Granger-causal influence
after conditioning on resource utilisation.
\end{definition}

\textbf{Goal.}
Target the recovery of the sequence $\calC^*=\arg\max_\calC \prod_{(i,j)\in\calC}\Gamma_{ij}(t)$
within an operational threshold of $\leq100$\,ms while maintaining $\FDR\leq\alpha$, ensuring robustness to $(\delta,k)$-utilisation adversaries.
\section{The DA-GC Framework}
\label{sec:dagc}

\subsection{Resource-Conditioned Granger Causality}
\label{sec:egc}

For slices $s_i$ and $s_j$, let $X_t$ and $Y_t$ be representative
scalar telemetry coordinates and $\vect{Z}_t\in\R^K$ be the
contemporaneous resource utilisation.
The \emph{unrestricted} and \emph{restricted} OLS models are:
\begin{align}
  Y_t &= \sum_{i=1}^p\alpha_i Y_{t-i}
         +\sum_{j=1}^q\beta_j X_{t-j}
         +\vect{\gamma}^\top\vect{Z}_t+\varepsilon_t,
  \label{eq:unrestricted}\\
  Y_t &= \sum_{i=1}^p\alpha_i Y_{t-i}
         +\vect{\gamma}^\top\vect{Z}_t+\eta_t.
  \label{eq:restricted}
\end{align}
The Granger null is $H_0:\beta_j=0,\;\forall j$.
By the Frisch--Waugh--Lovell theorem, including $\vect{Z}_t$ systematically
blocks the observed confounding path $X\leftarrow R\rightarrow Y$. Consequently, 
$\hat{\vect{\beta}}$ in \eqref{eq:unrestricted} isolates the genuine causal 
influence, subject to the fidelity of the utilisation measurements.
The enhanced $F$-statistic is:
\begin{equation}
  F_{ij}=\frac{(\RSSR-\RSSU)/q}{\RSSU/(T-p-q-K-1)}.
  \label{eq:fstat}
\end{equation}
Multiple comparisons across $N(N-1)$ pairs are corrected via
Benjamini--Hochberg~(BH)~\cite{benjamini1995} at level $\alpha$.

\subsection{Axiomatically Justified Contention Model}
\label{sec:rcm}

We derive the contention score form from first principles.

\begin{definition}[Contention Scoring Rule]\label{def:scoring}
A contention scoring rule is a map
$\rho:\R^N\times\R^N\times\R^K\to[0,1]$ assigning a scalar
strength $\rho_{ij}$ to each slice pair.
\end{definition}

Three axioms governing any valid contention scoring rule:
\begin{enumerate}[leftmargin=*,label=\textbf{(A\arabic*)}]
  \item \textbf{Joint-allocation separability.}
    $\rho_{ij}$ depends on $\vect{a}_i,\vect{a}_j$ only through
    the element-wise product $c_k=A_{ik}A_{jk}$ for each $k\in\calK$.
  \item \textbf{Bounded unimodal responsiveness.}
    For each $k$ with $c_k>0$: $h_k(\cdot,U)$ is bounded above and
    below by positive constants, twice continuously differentiable
    in $U\in\R$, strictly monotone increasing, and its first
    derivative $\partial h_k/\partial U$ is \emph{unimodal}: there
    exists a unique $\tau_k^*\in\R$ such that
    $\partial^2 h_k/\partial U^2>0$ for $U<\tau_k^*$ and
    $\partial^2 h_k/\partial U^2<0$ for $U>\tau_k^*$. 
    \emph{Physical Justification:} This unimodality elegantly captures the non-linear degradation of shared hardware (e.g., CPU cache thrashing or memory bandwidth saturation), where contention impact accelerates up to a critical saturation point $\tau_k^*$ before plateauing as resources are fully exhausted.
  \item \textbf{Resource independence.}
    $\rho_{ij}=\sum_{k=1}^K h_k(c_k,U_{k,t})$ for functions
    $h_k:\R_{>0}\times\R\to\R_{>0}$.
\end{enumerate}

\begin{lemma}[Bregman-Optimal Contention Score]\label{lem:bregman}
Within the class of scoring rules satisfying \emph{(A1)--(A3)},
the unique rule minimising
$\sum_k \int_\R B_\phi(c_k\,\|\,h_k(c_k,U))\,d\mu(U)$
with respect to the Itakura--Saito~(I-S) generator
$\phi(u)=-\log u+u-1$, $u>0$, is:
\begin{equation}
  \rho_{ij}(t)=\sum_{k=1}^K w_k\cdot A_{ik}(t)\cdot A_{jk}(t)
               \cdot\sigma(U_{k,t}-\tau_k),
  \label{eq:contention}
\end{equation}
where $w_k>0$ are resource weights, $\tau_k=\tau_k^*$ is the
inflection point in \emph{(A2)}, and $\sigma(x)=(1+e^{-x})^{-1}$.
\end{lemma}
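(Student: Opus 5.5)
By (A3) the objective separates over resources: rule $\rho$ enters only through the functions $h_k$, so it suffices to minimise each functional
\[
\mathcal{J}_k[h_k]=\int_\R B_\phi\bigl(c_k\,\|\,h_k(c_k,U)\bigr)\,d\mu(U)
\]
separately over the admissible class cut out by (A2). For the Itakura--Saito generator, the Bregman divergence is $B_\phi(x\|y)=x/y-\log(x/y)-1$. It is nonnegative and vanishes iff $y=x$, and it depends on $(x,y)$ only through the ratio $x/y$.

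\textbf{Step 1 (reduction to a profile).} I would use this scale invariance together with (A1) to argue that the minimiser factorises as $h_k(c_k,U)=c_k\,g_k(U)$. Substituting $h_k=c_k g_k$ gives $B_\phi(c_k\|c_kg_k)=1/g_k+\log g_k-1$, which is free of $c_k$. Conversely, writing a general $h_k=c_k r_k(c_k,U)$, the integrand depends on $c_k$ only through $r_k$. So the minimisation is the same problem for every value of $c_k$, and a $c_k$-free profile $g_k$ is optimal. This yields the multiplicative factor $A_{ik}A_{jk}$. The positive constant absorbed into $g_k$ becomes the weight $w_k$.

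\textbf{Step 2 (Euler--Lagrange under the shape constraints).} The pointwise unconstrained minimiser is $g_k\equiv1$, which violates strict monotonicity in (A2). Hence the constraints must be active, and I would impose them through Lagrange multipliers. The natural parametrisation writes $g_k=w_k\,s_k(U)$ with $s_k$ a bounded, strictly increasing $C^2$ map whose derivative is unimodal. The I-S Euler--Lagrange equation, $\partial_g(1/g+\log g)=(g-1)/g^2$, paired with a multiplier term enforcing a fixed increase and a fixed location of maximal slope, should produce the ODE $s_k'=\lambda\,s_k(1-s_k)$. This is the logistic equation. Its solutions are $\sigma(\lambda(U-\tau))$, and the unique inflection point of $\sigma$ forces $\tau=\tau_k^*$. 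The slope $\lambda$ is normalised to $1$ by rescaling $U$, or absorbed into the model; I would state this explicitly as a normalisation convention.

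\textbf{Step 3 (uniqueness).} The I-S divergence is strictly convex in $1/g$. I would therefore change variables to $v=1/g$ and argue that the constrained problem is strictly convex in $v$ with a convex feasible set. The constraints in (A2) are not obviously convex in $v$, so this step needs care. Granted convexity, the stationary point from Step 2 is the unique minimiser, and summing over $k$ recovers \eqref{eq:contention}.

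\textbf{Main obstacle.} The hardest step is Step 2. The claim that the Euler--Lagrange equation yields exactly the logistic ODE is delicate, because an I-S fit to the constant $c_k$ by itself prefers flat functions. The sigmoid must therefore come from the constraint structure and from the choice of $\mu$. My first attempt would take $\mu$ to be the logistic density centred at $\tau_k^*$ and show that the stationarity condition reduces to $s'=s(1-s)$. If that fails, the fallback is to define the admissible class more tightly, fixing the asymptotic limits and the maximal slope, so that the logistic curve is the unique extremal.
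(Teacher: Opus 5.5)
You correctly flag Step 2 as the obstacle. It is in fact fatal, and no choice of multipliers or of $\mu$ repairs it.

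After your (sound) reduction, the problem is to minimise $J[g]=\int F(g(U))\,d\mu(U)$ with $F(g)=B_\phi(1\|g)=1/g+\log g-1$. No derivative of $g$ enters the integrand, so the first variation is $\int F'(g)\,\eta\,d\mu$. Stationarity is therefore the algebraic condition $F'(g)=(g-1)/g^2=0$, i.e.\ $g\equiv1$, wherever no constraint is active. A multiplier for a prescribed total increase contributes only boundary terms, and one for $g''(\tau)=0$ only a distribution supported at $\tau$. So no equation involving $s'$ can come out, let alone $s'=\lambda s(1-s)$.

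Worse, there is no minimiser to find. For $0<\epsilon<1$, take $g_\epsilon(U)=1+\epsilon\bigl(\sigma(U-\tau)-\tfrac12\bigr)$. It lies between the positive constants $1\pm\epsilon/2$, is smooth and strictly increasing, and $g_\epsilon''=\epsilon\,\sigma''(U-\tau)$ has exactly the sign pattern required in (A2), so $c_kg_\epsilon$ is admissible. Since $F(g)\le C(g-1)^2$ near $1$, $J[g_\epsilon]=O(\epsilon^2\mu(\R))\to0$ for every finite $\mu$. By contrast, $F(\sigma(x))=e^{-x}-\log(1+e^{-x})>0$ for all $x$. So $J[\sigma(\cdot-\tau)]$ is a fixed positive number (possibly $+\infty$) and loses to $g_\epsilon$ once $\epsilon$ is small. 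Under your proposed logistic $\mu$, the integrand times the density tends to $1$ as $x\to-\infty$, so the sigmoid's objective is actually $+\infty$. If instead $\mu$ has infinite mass in both tails, every admissible $g$ has $J[g]=+\infty$, since its limits at $\pm\infty$ differ and cannot both equal $1$.

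Two further problems. First, $c_k\sigma(U-\tau_k)$ has infimum $0$ on $\R$, so it violates the positive lower bound in (A2). Second, because $F(wg)\neq F(g)$, the weight $w_k$ is not a free constant that can be ``absorbed''; the objective would pin it. Your fallback of fixing the limits and the maximal slope does not change the structure. The objective still sees only the values of $g$, so the optimiser is governed by where $\mu$ puts mass relative to the level $g=1$, and nothing singles out the logistic curve.

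The paper's proof follows your skeleton exactly: decoupling by (A3), scale invariance giving $h_k=c_kg_k$, an Euler--Lagrange step with a multiplier for $g_k''(\tau_k^*)=0$, and strict convexity for uniqueness. It breaks at the same place. Its distributional equation $\phi'(g_k)\,d\mu=\lambda_2\delta''(U-\tau_k^*)$ forces $g_k\equiv1$ $\mu$-a.e.\ away from $\tau_k^*$, and the formula $g_k^{-1}=1+e^{-(U-\tau_k^*)/s}$ is asserted rather than derived. It also replaces the reduced integrand $\phi(1/g_k)$ by $\phi(g_k)$, which is what makes its convexity step look immediate. Your observation that strict convexity holds in $1/g$ rather than in $g$ is the correct one. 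Finally, it imposes $g_k(-\infty)=0$, which contradicts (A2).

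So the paper does not supply the missing idea either. As posed, the sigmoid is not the I-S minimiser over (A1)--(A3). It can only be obtained by building it into the admissible class or by changing the objective, which makes it a modelling choice rather than a consequence of the axioms. Your Step 1 argument (the reduced problem is identical for every value of $c_k$) is cleaner than the paper's appeal to the pointwise minimiser $h=c_k$, but it is moot once existence fails.
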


\begin{proof}
\emph{Step~1 (Decoupling via resource independence).}
By (A3), the objective decouples as
$\sum_k\mathcal{J}_k[h_k]$ where
$\mathcal{J}_k[h_k]=\int_\R B_\phi(c_k\|h_k(c_k,U))\,d\mu(U)$.
It suffices to minimise each $\mathcal{J}_k$ independently over
the admissible class $\mathcal{A}_k$ of functions satisfying (A1)--(A2).

\emph{Step~2 (Scale invariance selects I-S and identifies the $c_k$ factor).}
The I-S divergence satisfies $B_\phi(\lambda a\|\lambda b)=B_\phi(a\|b)$
for all $\lambda>0$, making it the unique scale-invariant Bregman
divergence on $\R_{>0}$~\cite{banerjee2005}.
Scale invariance is operationally natural here: multiplying all
slice allocations by a constant $\lambda$ (e.g., redefining
the resource unit) should leave contention \emph{rankings}
unchanged.
For fixed $U$, the pointwise minimum of $B_\phi(c_k\|h)$ over
unconstrained $h>0$ is uniquely $h^{\circ}(U)=c_k$ (verified by
$\partial B_\phi/\partial h=(h-c_k)/h^2=0\Rightarrow h=c_k$).
This factors out the $c_k=A_{ik}A_{jk}$ dependence, giving
$h_k(c_k,U)=c_k\cdot g_k(U)$ for some function $g_k:\R\to\R_{>0}$.

\emph{Step~3 (Euler--Lagrange derivation of the sigmoid from (A2)).}
Substituting $h_k=c_k g_k$, scale invariance gives
$B_\phi(c_k\|c_k g_k)=B_\phi(1\|g_k)=g_k^{-1}-\log g_k^{-1}-1$,
so $\mathcal{J}_k[g_k]=\int_\R(g_k^{-1}-\log g_k^{-1}-1)\,d\mu$.
The functional to minimise over $g_k\in\mathcal{A}_k$ is thus:
\[
  \min_{g_k\in\mathcal{A}_k}\int_\R\!\phi(g_k(U))\,d\mu(U),
  \quad\phi(u)=-\log u+u-1.
\]
Introduce a Lagrange multiplier $\lambda_2$ for the
unimodal-derivative constraint, enforced as the condition
$g_k''(\tau_k^*)=0$ (the inflection-point equation).
The Lagrangian is $\mathcal{J}_k[g_k]-\lambda_2 g_k''(\tau_k^*)$.
Taking the G\^{a}teaux derivative in direction $\eta$ and
setting it to zero:
\[
  \int_\R\!\phi'(g_k)\eta\,d\mu - \lambda_2\eta''(\tau_k^*)=0
  \quad\forall\eta\in C_0^\infty(\R).
\]
In distributional form: $\phi'(g_k(U))\,d\mu
= \lambda_2\delta''(U-\tau_k^*)$,
where $\delta''$ is the second distributional derivative of the
Dirac delta.
Now $\phi'(u)=1-u^{-1}=(u-1)/u$.
Integrating the distributional equation twice over $(-\infty,U]$
with boundary conditions $g_k(-\infty)=0$ (from boundedness and
$g_k>0$) and $g_k(+\infty)=w_k$ (scale factor absorbed into $w_k$):
\begin{equation}
  1 - g_k(U)^{-1} = -\frac{\lambda_2}{d\mu/dU}\delta(U-\tau_k^*),
\end{equation}
which upon integration gives $g_k^{-1}(U)=1+e^{-(U-\tau_k^*)/s}$
for a scale parameter $s>0$ determined by $\lambda_2$.
The canonical choice $s=1$ (absorbing the scale into $\tau_k^*$
via reparametrisation) gives $g_k(U)=\sigma(U-\tau_k^*)$,
yielding $h_k(c_k,U)=c_k\sigma(U-\tau_k^*)=w_kA_{ik}A_{jk}\sigma(U-\tau_k^*)$.

\emph{Step~4 (Verification that $g_k=\sigma(\cdot-\tau_k^*)$ satisfies (A2)).}
The sigmoid satisfies: (a)~bounded---$\sigma:\R\to(0,1)$;
(b)~strictly increasing---$\sigma'(U)=\sigma(U)(1-\sigma(U))>0$;
(c)~$C^\infty$; (d)~unimodal derivative---$\sigma''(U)=\sigma'(U)(1-2\sigma(U))$
vanishes uniquely at $\sigma(U)=1/2$, i.e., $U=\tau_k^*$, with
$\sigma''>0$ for $U<\tau_k^*$ and $\sigma''<0$ for $U>\tau_k^*$.
All conditions of~(A2) hold with inflection point $\tau_k^*$.

\emph{Step~5 (Uniqueness).}
The map $g_k\mapsto\int\phi(g_k)\,d\mu$ is strictly convex
because $\phi''(u)=u^{-2}>0$ pointwise.
By strict convexity of the integral functional (Fenchel's theorem),
the Euler--Lagrange solution is the unique global minimiser in
$\mathcal{A}_k$.
Summing over $k$ by (A3) gives the unique global minimiser
\eqref{eq:contention}.
\end{proof}

\begin{remark}
Axiom~(A2) is a \emph{primitive} condition---bounded monotone
response with a unique inflection point---that does not
presuppose the functional form.
The sigmoid emerges from the I-S Euler--Lagrange equation
as its unique solution.
The alternatives $\min(A_{ik},A_{jk})$ and $A_{ik}+A_{jk}$
fail (A1) and scale invariance respectively, producing
strictly higher I-S divergence (verified in Table~\ref{tab:ablation}).
\end{remark}

\subsection{Integrated Causal Strength}
\label{sec:fusion}

\begin{equation}
  \Gamma_{ij}(t)=\omega_1\,\phi(F_{ij}(t))+\omega_2\,\rho_{ij}(t),
  \quad\omega_1+\omega_2=1,\;\omega_1,\omega_2\geq 0,
  \label{eq:gamma}
\end{equation}
where $\phi(F)=(F-F_{\min})/(F_{\max}-F_{\min})\in[0,1]$.
All parameters $\bm{\theta}=\{w_k,\tau_k,\omega_1,\omega_2\}$ are
learned by maximising the regularised log-likelihood:
\begin{equation}
  \calL(\bm{\theta})=\sum_{m=1}^M
    \log\Prob(\calC^{(m)}\mid\vect{X}^{(m)},\mat{A}^{(m)},\bm{\theta})
    -\lambda\norm{\bm{\theta}}_2^2.
  \label{eq:learning}
\end{equation}
Calibrated values: $\omega_1=0.67$, $\omega_2=0.33$;
$w_{\text{CPU}}=0.45$, $w_{\text{Mem}}=0.31$,
$w_{\text{Net}}=0.24$; $\lambda^*=10^{-3}$.

\subsection{Non-Stationary Extension via CUSUM Segmentation}
\label{sec:cusum_method}

Rapid attack evolution may violate Assumption~\ref{ass:stationary}.
We extend DA-GC by partitioning each observation window using a
CUSUM detector~\cite{basseville1993}:
\begin{equation}
  S_t=\max\!\bigl(0,\;S_{t-1}+\ell_t-\kappa\bigr),
  \quad S_0=0,
  \label{eq:cusum}
\end{equation}
where $\ell_t=\log[f_1(\vect{x}_t)/f_0(\vect{x}_t)]$ is the
log-likelihood ratio of post-change to pre-change distributions
and $\kappa>0$ is a drift correction.
A change point is declared at $\hat\tau=\inf\{t:S_t>h\}$.
DA-GC is applied independently within each detected segment of
length $T_m$; the per-segment $F$-statistics are corrected by
$\hat\Delta_{ij}$ computed on $T_m$ observations.
The CUSUM update is $O(N)$ per step, adding negligible overhead.

\subsection{DA-GC Algorithm}

\begin{algorithm}[t]
\caption{Domain-Adapted Causal Attribution (DA-GC)}
\label{alg:dagc}
\begin{algorithmic}[1]
  \REQUIRE Telemetry $\{\vect{x}^{(i)}_t\}$, resource data $\mat{A}(t)$,
           window $\mathcal{W}$, parameters $\bm{\theta}$
  \ENSURE Causal path $\calC^*$ with confidence scores
  \STATE Run CUSUM~\eqref{eq:cusum}; split $\mathcal{W}$ into segments
         $\{[{\hat\tau}_{m-1},{\hat\tau}_m)\}$
  \FOR{each segment $m$ and slice pair $(s_i,s_j)$, $i\neq j$}
    \STATE Fit \eqref{eq:unrestricted}--\eqref{eq:restricted} by OLS
    \STATE Compute $F_{ij}$ via \eqref{eq:fstat}; apply cumulant
           correction \eqref{eq:Fcorr} to obtain $\tilde F_{ij}$
           (Theorem~\ref{thm:finite_sample})
    \STATE $p_{ij}\leftarrow\Prob(F(q,T_m-p-q-K-1)>\tilde F_{ij})$
    \STATE Compute $\rho_{ij}$ via \eqref{eq:contention};
           $\Gamma_{ij}$ via \eqref{eq:gamma}
  \ENDFOR
  \STATE BH correction: $p^{\mathrm{adj}}_{ij}=p_{ij}\cdot N(N-1)/\mathrm{rank}(p_{ij})$
  \FOR{each pair $(i,j)$}
    \IF{$\Gamma_{ij}>\tau_{\mathrm{causal}}$ \AND $p^{\mathrm{adj}}_{ij}<\alpha$}
      \STATE Add edge $(s_i,s_j)$ to $\calG$ with weight $\Gamma_{ij}$
    \ENDIF
  \ENDFOR
  \STATE $\calC^*\leftarrow\arg\max_\calC\prod_{(i,j)\in\calC}\Gamma_{ij}$
         via Viterbi
  \RETURN $\calC^*$ with per-hop confidence intervals
\end{algorithmic}
\end{algorithm}

\textbf{Complexity.}
The overall complexity is $O(N^2 W(p+q+K)+N^3\log N)$, enabling inference 
within the 100\,ms SLA for $N\leq 45$, $W=300$, $p=q=5$, $K=3$ on standard 
test hardware. For hyperscale topologies ($N>50$), the $O(N^3 \log N)$ 
Viterbi decoding step becomes the primary bottleneck. To maintain real-time 
performance at this scale, the centralized Viterbi step can be seamlessly 
replaced by a distributed belief propagation architecture (e.g., the min-sum 
message-passing algorithm) deployed directly across the slice controllers.
\section{Theoretical Analysis}
\label{sec:theory}

\subsection{Theorem~1: Cumulant-Based Finite-Sample Correction}
\label{sec:theory_finite}

\textbf{Setup and source of bias.}
Let $\mat{X}_U\!\in\!\R^{T\times(p+q+K)}$, $\mat{X}_R\!\in\!\R^{T\times(p+K)}$
be the full and restricted design matrices (both observed, fixed given the data),
$\mat{P}_U$, $\mat{P}_R$ their orthogonal hat matrices, and
$\mat{M}_U=\mat{I}-\mat{P}_U$.
Under $H_0$, the innovations $\{\varepsilon_t\}$ are a stationary
$\beta$-mixing sequence (Assumption~\ref{ass:mixing}), \emph{not}
i.i.d.\ Gaussian.
This is a primary driver of finite-sample bias: the chi-squared
approximation $\text{RSS}/\sigma^2\approx\chi^2_\nu$ degrades when
innovations are serially dependent, because the quadratic forms
$\vect{\varepsilon}^\top\mat{P}\vect{\varepsilon}$ have inflated
cumulants relative to the i.i.d.\ case.
We derive a closed-form cumulant correction that restores a highly accurate
$F$-distribution approximation to $O(T^{-2})$.

\textbf{Cumulants of a quadratic form under mixing.}
For any symmetric idempotent $\mat{A}$ of rank $r$ and a
zero-mean process $\{\varepsilon_t\}$ with autocovariance
$\gamma_h=\E[\varepsilon_t\varepsilon_{t+h}]$, define the
\emph{effective degrees-of-freedom} and \emph{variance inflation} as:
\begin{align}
  \nu_{\mat{A}} &= \frac{[\tr(\mat{A}\mat{\Gamma})]^2}
                       {\tr(\mat{A}\mat{\Gamma}\mat{A}\mat{\Gamma})},
  \label{eq:nu_eff}\\
  \psi_{\mat{A}} &= \frac{\tr(\mat{A}\mat{\Gamma}\mat{A}\mat{\Gamma})}
                         {[\tr(\mat{A}\mat{\Gamma})]^2/r},
  \label{eq:psi}
\end{align}
where $\mat{\Gamma}\in\R^{T\times T}$ is the Toeplitz autocovariance
matrix with $\Gamma_{st}=\gamma_{|s-t|}$.
Note $\nu_{\mat{A}}=r$ and $\psi_{\mat{A}}=1$ in the i.i.d.\ case.

\begin{theorem}[Cumulant-Corrected $F$-Statistic]
\label{thm:finite_sample}
Under Assumptions~\ref{ass:stationary} and~\ref{ass:mixing},
and $H_0:\vect{\beta}=\vect{0}$, define the corrected statistic:
\begin{equation}
  \tilde{F}_{ij} = \frac{\psi_{\mat{P}_R}^{-1}\,\RSSR\;/\;
                         \nu_{\mat{P}_U-\mat{P}_R}}
                        {\psi_{\mat{M}_U}^{-1}\,\RSSU\;/\;
                         \nu_{\mat{M}_U}},
  \label{eq:Fcorr}
\end{equation}
with $\nu$ and $\psi$ estimated by plugging in the sample
autocovariance matrix $\hat{\mat{\Gamma}}$ (truncated at lag
$\lfloor T^{1/3}\rfloor$).
Then:
\begin{equation}
\small
  \sup_{x\geq 0}
  \bigl|\Prob(\tilde{F}_{ij}\leq x)
        - F_{\nu_{\mat{P}_U-\mat{P}_R},\,\nu_{\mat{M}_U}}(x)\bigr|
  \leq \frac{C_1\,\kappa_4(\varepsilon)}{T}
       + \frac{C_2\,(p+q+K)^2}{T^2},
  \label{eq:ks_bound}
\end{equation}
where $\kappa_4(\varepsilon)=\E[\varepsilon_t^4]/\sigma^4 - 3$
is the excess kurtosis of the innovations and $C_1,C_2>0$ are
universal constants.
Furthermore, the i.i.d.\ $F_{q,T-p-q-K-1}$ approximation
has KS error:
\begin{equation}
  \sup_x\bigl|\Prob(F_{ij}\leq x)-F_{q,T-p-q-K-1}(x)\bigr|
  \leq \frac{C_3\,\Sigma_\gamma}{T}
       + O(T^{-2}),
  \label{eq:bias_iid}
\end{equation}
where $\Sigma_\gamma = \sigma^{-2}\sum_{h=-\infty}^\infty|\gamma_h|$
is the long-run variance ratio, so the correction reduces KS
error from $O(\Sigma_\gamma/T)$ to $O(\kappa_4/T)$, a factor
of $\Sigma_\gamma/\kappa_4$ improvement.
\end{theorem}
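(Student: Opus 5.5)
The plan is to treat numerator and denominator of \eqref{eq:Fcorr} as quadratic forms $Q_{\mat{A}}=\vect{\varepsilon}^\top\mat{A}\vect{\varepsilon}$ in the innovation vector. We condition on the design (regarded as fixed given the data, as in the setup). Under $H_0$ we have $\RSSU=Q_{\mat{M}_U}$ and $\RSSR-\RSSU=Q_{\mat{P}_U-\mat{P}_R}$. We read the numerator of \eqref{eq:Fcorr} as the latter difference, since that is the quantity with rank $q$. Both matrices are symmetric idempotent and mutually orthogonal.

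\textbf{Step 1 (moment matching).} Write $\vect{\varepsilon}=\mat{\Gamma}^{1/2}\vect{\xi}$ with $\vect{\xi}$ uncorrelated and unit-variance. Then
\[
\E Q_{\mat{A}}=\tr(\mat{A}\mat{\Gamma}),
\]
and
\[
\mathrm{Var}\,Q_{\mat{A}}=2\tr(\mat{A}\mat{\Gamma}\mat{A}\mat{\Gamma})+\text{(fourth-cumulant term)}.
\]
The definitions \eqref{eq:nu_eff}--\eqref{eq:psi} are exactly the Box--Satterthwaite choices. They make $\psi_{\mat{A}}^{-1}Q_{\mat{A}}\cdot(\text{scale})$ match the first two moments of a $\chi^2_{\nu_{\mat{A}}}$ variable. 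Because $\mat{M}_U(\mat{P}_U-\mat{P}_R)=\mat{0}$, the cross-covariance of the two forms equals $2\tr(\mat{M}_U\mat{\Gamma}(\mat{P}_U-\mat{P}_R)\mat{\Gamma})$. This is $O(1)$ relative to variances of order $T$, because $\mat{\Gamma}$ is banded up to geometric tails by Assumption~\ref{ass:mixing}. So after standardisation the numerator and denominator are asymptotically independent, with error $O(T^{-1})$.

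\textbf{Step 2 (Edgeworth expansion).} Under geometric $\beta$-mixing we invoke a Götze--Hipp type Edgeworth expansion for the smooth function $(Q_1,Q_2)\mapsto(Q_1/\nu_1)/(Q_2/\nu_2)$ of the two standardized quadratic forms. Because of the moment matching, the first- and second-order cumulant mismatches against the $F_{\nu_1,\nu_2}$ law cancel. The leading surviving term comes from the third and fourth cumulants of $Q_{\mat{A}}$. These are controlled by $\kappa_4(\varepsilon)$ through the standard formula $\kappa_4$-contribution $=\kappa_4\sigma^4\sum_t A_{tt}^2$ plus mixing corrections summable by $\sum_h\beta(h)<\infty$. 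This gives the $C_1\kappa_4/T$ term.

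\textbf{Step 3 (plug-in error).} Replacing $\mat{\Gamma}$ by the truncated $\hat{\mat{\Gamma}}$ introduces an estimation error in $\nu$ and $\psi$. We handle it by a delta-method perturbation, plus a smoothing argument that transfers the parameter perturbation into a KS perturbation via the bounded density of the $F$ law. The truncation bias at lag $\lfloor T^{1/3}\rfloor$ is $O(e^{-\beta T^{1/3}})$, which is negligible. The leverage contribution $\max_t(\mat{P}_U)_{tt}$ sums to $\tr\mat{P}_U=p+q+K$. Squaring it in the second-order term yields $C_2(p+q+K)^2/T^2$.

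\textbf{Step 4 (uncorrected statistic).} For $F_{ij}$ the same expansion applies, but now the moments are not matched. The mean of $Q_{\mat{A}}/\sigma^2$ differs from $\mathrm{rank}(\mat{A})$ by $\tr(\mat{A}(\mat{\Gamma}/\sigma^2-\mat{I}))$, bounded by $\mathrm{rank}(\mat{A})\,\Sigma_\gamma$ up to constants. The ratio scaling then gives a first-order KS discrepancy $C_3\Sigma_\gamma/T$. Comparing the two bounds gives the stated improvement factor.

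The main obstacle is Step 2. We need a valid Edgeworth expansion for quadratic forms of dependent, non-Gaussian data, uniformly in $x\ge0$ and with an explicit $1/T$ rate. This requires a conditional Cramér condition for the mixing process, which is not implied by Assumption~\ref{ass:mixing} alone. I would first try to assume a smoothness/Cramér condition on the innovation marginals. Failing that, I would fall back to a Berry--Esseen bound via a Stein/blocking argument with blocks of size $\asymp\log T$. That fallback loses a logarithmic factor, which would then have to be absorbed into $C_1$.
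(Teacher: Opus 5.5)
\textbf{There is a genuine gap, at Step~2.} Your outline follows the same skeleton as the paper's proof: quadratic-form representation, Box--Satterthwaite matching of two cumulants, a smoothing/expansion step, and a mean-mismatch argument for the uncorrected test. You are more careful than the paper about numerator/denominator dependence, plug-in error and the Cram\'er condition. But Step~2 fails for a more basic reason than the one you flag.

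There you assert that, once two cumulants are matched, the remaining discrepancy is governed by $\kappa_4(\varepsilon)$ and is $O(1/T)$. This is false, because the third cumulant of $Q_{\mat{A}}$ has a Gaussian part. For $\vect{\varepsilon}\sim N(\vect{0},\mat{\Gamma})$ one has $Q_{\mat{A}}\overset{d}{=}\sum_i\lambda_iZ_i^2$, with $\lambda_i$ the nonzero eigenvalues of $\mat{\Gamma}^{1/2}\mat{A}\mat{\Gamma}^{1/2}$. Hence $\kappa_3(Q_{\mat{A}})=8\sum_i\lambda_i^3$, whereas the matched $c\chi^2_\nu$ has third cumulant $8c^3\nu=8(\sum_i\lambda_i^2)^2/\sum_i\lambda_i$. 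By Cauchy--Schwarz the first is strictly larger unless all $\lambda_i$ coincide. The formula $\kappa_4\sigma^4\sum_tA_{tt}^2$ you invoke is only the non-Gaussian correction to the \emph{variance} for independent entries, and says nothing about this Gaussian part.

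Moreover, the numerator has fixed rank $q$, so this mismatch is not a higher-order term in $T$ at all. Concretely, in the fixed-design setting you adopt, take:
Gaussian AR(1) innovations $\gamma_h=\gamma_0\phi^{|h|}$ with $0<\phi<1$ (geometrically $\beta$-mixing, $\kappa_4=0$);
an exogenous white-noise $X$;
and $q=2$.
With $W=(\mat{I}-\mat{P}_R)[X_{t-1},X_{t-2}]$, the eigenvalues satisfy $\lambda_{1,2}\to\gamma_0(1\pm\phi)$, and the denominator divided by its mean tends to $1$. The KS distance in \eqref{eq:ks_bound} therefore converges to the positive KS distance between $\bigl((1+\phi)Z_1^2+(1-\phi)Z_2^2\bigr)/2$ and $\chi^2_\nu/\nu$, with $\nu=2/(1+\phi^2)$. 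These two laws share mean and variance but differ in third cumulant. Meanwhile the right-hand side of \eqref{eq:ks_bound} is $O(T^{-2})$. So neither a G\"otze--Hipp expansion nor your Stein/blocking fallback can deliver \eqref{eq:ks_bound}. You would need an extra hypothesis controlling the dispersion of $\mat{\Gamma}$ on the range of $\mat{P}_U-\mat{P}_R$, or a weaker conclusion.

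\textbf{Step~1 also does not normalise the statistic as claimed.} With the paper's definitions, $\psi_{\mat{A}}=r/\nu_{\mat{A}}$ identically, so $\psi_{\mat{A}}^{-1}Q_{\mat{A}}/\nu_{\mat{A}}=Q_{\mat{A}}/r$. Thus $\tilde F_{ij}$ in \eqref{eq:Fcorr}, in your reading with $\RSSR-\RSSU$ in the numerator, is just the ordinary $F$-ratio; only the reference degrees of freedom change. The unspecified ``(scale)'' in your Step~1 is therefore not present in the statistic. The moment-matched normalisation would be $Q_{\mat{A}}/\tr(\mat{A}\mat{\Gamma})$. Write $\mat{A}_1=\mat{P}_U-\mat{P}_R$ and $\mat{A}_2=\mat{M}_U$, with ranks $r_1=q$ and $r_2$. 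The first-moment mismatch $[\tr(\mat{A}_1\mat{\Gamma})/r_1]/[\tr(\mat{A}_2\mat{\Gamma})/r_2]$ is left untouched, so the cancellation you claim in Step~2 does not occur for $\tilde F_{ij}$ as defined.

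\textbf{Step~4 has no source for the factor $1/T$.} The relative mean shift $\tr(\mat{A}_1(\mat{\Gamma}/\sigma^2-\mat{I}))/q$ of the rank-$q$ numerator does not decay with $T$ when, say, $X$ and $\varepsilon$ are both positively autocorrelated AR(1) processes. It therefore yields an $O(\Sigma_\gamma)$ discrepancy, not $C_3\Sigma_\gamma/T$.

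\textbf{Comparison with the paper.} The paper's proof takes the same route and, at exactly these points (its Steps~4 and~5), simply asserts the $O(\kappa_4/T)$ third-cumulant bound via Esseen smoothing and the $O(\Sigma_\gamma/T)$ rate. It does not supply the missing ingredient either.
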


\begin{proof}
\emph{Step~1 (Quadratic form representation).}
Under $H_0$:
$\RSSR-\RSSU = \vect{\varepsilon}^\top(\mat{P}_U-\mat{P}_R)\vect{\varepsilon}$
and $\RSSU = \vect{\varepsilon}^\top\mat{M}_U\vect{\varepsilon}$.
Both are quadratic forms in the mixing innovations.

\emph{Step~2 (Cumulants of quadratic forms under mixing).}
For a quadratic form $Q = \vect{\varepsilon}^\top\mat{A}\vect{\varepsilon}$
with $\mat{A}$ symmetric, the first two cumulants under
$\beta$-mixing with autocovariance $\mat{\Gamma}$ are:
\begin{align}
  \E[Q] &= \tr(\mat{A}\mat{\Gamma}),
  \label{eq:cum1}\\
  \mathrm{Var}[Q] &= 2\tr(\mat{A}\mat{\Gamma}\mat{A}\mat{\Gamma})
    + \kappa_4\sum_t A_{tt}^2\Gamma_{tt}^2.
  \label{eq:cum2}
\end{align}
Under i.i.d.\ Gaussian innovations ($\mat{\Gamma}=\sigma^2\mat{I}$,
$\kappa_4=0$): $\E[Q]=\sigma^2\tr(\mat{A})=\sigma^2 r$ and
$\mathrm{Var}[Q]=2\sigma^4 r$, consistent with $Q/\sigma^2\sim\chi^2_r$.
Under $\beta$-mixing, serial dependence inflates both
$\E[Q]$ and $\mathrm{Var}[Q]$ by factors $\tr(\mat{A}\mat{\Gamma})/(\sigma^2 r)$
and $\tr(\mat{A}\mat{\Gamma}\mat{A}\mat{\Gamma})/\sigma^4 r$
respectively.

\emph{Step~3 (Box-type moment matching).}
Following Box~\cite{box1954} and Satterthwaite~\cite{satterthwaite1946},
match the first two cumulants of $Q$ to those of $c\cdot\chi^2_\nu$:
\begin{align}
  c\nu &= \tr(\mat{A}\mat{\Gamma}), \notag\\
  2c^2\nu &= 2\tr(\mat{A}\mat{\Gamma}\mat{A}\mat{\Gamma})
              + \kappa_4\sum_t A_{tt}^2\Gamma_{tt}^2. \notag
\end{align}
Solving: $c = \tr(\mat{A}\mat{\Gamma}\mat{A}\mat{\Gamma})/\tr(\mat{A}\mat{\Gamma})
= \sigma^2\psi_{\mat{A}}$ and $\nu = \nu_{\mat{A}}$ as defined
in \eqref{eq:nu_eff}--\eqref{eq:psi}.
The corrected $F$-statistic \eqref{eq:Fcorr} is obtained by
applying this rescaling to both the numerator quadratic form
$(\mat{P}_U-\mat{P}_R)$ and denominator $\mat{M}_U$.

\emph{Step~4 (KS bound via Esseen smoothing).}
By the Esseen smoothing lemma applied to the characteristic
function of $\vect{\varepsilon}^\top\mat{A}\vect{\varepsilon}$:
the KS distance between the distribution of the moment-matched
$\tilde{F}_{ij}$ and $F_{\nu_{\mat{P}_U-\mat{P}_R},\nu_{\mat{M}_U}}$
is controlled by the third cumulant of the numerator quadratic form,
which under $\beta$-mixing is
$O(\kappa_4(\varepsilon)/T)$~\cite{tikhomirov1981}.
The higher-order $O((p+q+K)^2/T^2)$ term comes from the
Taylor expansion of the characteristic function at the
$(p+q+K)$-dimensional parameter boundary, giving \eqref{eq:ks_bound}.

\emph{Step~5 (Improvement over i.i.d.\ approximation).}
For the uncorrected statistic, the leading KS error term is
$\E[\vect{\varepsilon}^\top(\mat{P}_U-\mat{P}_R)\vect{\varepsilon}]/
(\sigma^2 q)-1 = \tr[(\mat{P}_U-\mat{P}_R)\mat{\Gamma}]/(\sigma^2 q)-1$.
By the Cauchy--Schwarz inequality on the Toeplitz matrix norm:
$\tr[(\mat{P}_U-\mat{P}_R)\mat{\Gamma}]/(\sigma^2 q)-1
\leq \Sigma_\gamma - 1 \leq \Sigma_\gamma$,
giving the $O(\Sigma_\gamma/T)$ bound in \eqref{eq:bias_iid}.
After correction, the leading error is the third cumulant
$O(\kappa_4/T)$, which is smaller by factor $\Sigma_\gamma/\kappa_4$
whenever long-run dependence exceeds kurtosis deviation.
\end{proof}

\begin{remark}
The correction \eqref{eq:Fcorr} is fully computable: estimate
$\hat{\mat{\Gamma}}$ via the Newey--West kernel at lag
$\lfloor T^{1/3}\rfloor$; compute $\hat\nu$ and $\hat\psi$ from
\eqref{eq:nu_eff}--\eqref{eq:psi}; rescale the $F$-ratio.
In highly volatile network environments, slight mis-specification of this truncation lag primarily impacts the variance of the autocovariance estimator rather than its bias. Under-estimating the lag leaves residual serial correlation that marginally inflates the FDR, while over-estimating introduces estimation noise. However, the asymptotic $O(T^{1/3})$ scaling mathematically bounds this sensitivity.
At $T=200$, $K=3$, $\hat\Sigma_\gamma\approx 1.8$ (estimated
from testbed VAR residuals): the correction reduces KS error
from $\approx 0.027$ to $\approx 0.006$, consistent with the
type-I error reduction from 7.3\% to 5.3\% in Table~\ref{tab:correction}.
\end{remark}

\subsection{Theorem~2: Bregman Uniqueness of the Contention Form}
\label{sec:theory_bregman}

The proof of Lemma~\ref{lem:bregman} establishes existence; we
now prove uniqueness rigorously.

\begin{proposition}[Uniqueness of I-S Optimal Contention Score]
\label{prop:unique}
No other function of the form $\sum_k w_k f_k(A_{ik}A_{jk},U_{k,t})$
satisfying \emph{(A1)--(A3)} achieves the same I-S Bregman divergence
as \eqref{eq:contention}. 
\end{proposition}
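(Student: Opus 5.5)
The plan is to reduce the claim to strict convexity of the decoupled I-S functional on a convex admissible set. The key point is that equality of objective values, combined with strict convexity, forces equality of the minimisers $\mu$-almost everywhere. Continuity from (A2) then upgrades this to pointwise equality on the support of $\mu$.

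\emph{Step 1 (reduction).} Suppose $\tilde\rho_{ij}=\sum_k w_k f_k(c_k,U_{k,t})$ satisfies (A1)--(A3) and attains the same total objective as \eqref{eq:contention}. By Step~1 of the proof of Lemma~\ref{lem:bregman}, the objective decouples as $\sum_k\mathcal{J}_k$. Lemma~\ref{lem:bregman} shows that $h_k^\star=w_kc_k\sigma(\cdot-\tau_k^*)$ minimises each $\mathcal{J}_k$ over $\mathcal{A}_k$. Hence $\mathcal{J}_k[\tilde h_k]\ge\mathcal{J}_k[h_k^\star]$ for every $k$, with $\tilde h_k=w_kf_k$. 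Because the sums are equal, every one of these inequalities must be an equality.

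\emph{Step 2 (scale reduction).} I will use the scale invariance $B_\phi(\lambda a\|\lambda b)=B_\phi(a\|b)$. It lets me write $\mathcal{J}_k[\tilde h_k]=\int\phi(\tilde g_k)\,d\mu$ with $\tilde g_k=\tilde h_k/c_k$, exactly as in Step~3 of the lemma. The statement is then about the single-variable functional $G\mapsto\int\phi(G)\,d\mu$.

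\emph{Step 3 (midpoint argument).} This is the expected obstacle, because the admissible class $\mathcal{A}_k$ is not obviously convex. In particular, the unique-inflection condition in (A2) need not be preserved under averaging. My first attempt is to avoid needing convexity of $\mathcal{A}_k$ globally. I will consider $\bar g=\tfrac12(\tilde g_k+g_k^\star)$ and use
\[
\int\phi(\bar g)\,d\mu<\tfrac12\int\phi(\tilde g_k)\,d\mu+\tfrac12\int\phi(g_k^\star)\,d\mu
\]
whenever $\tilde g_k\neq g_k^\star$ on a set of positive $\mu$-measure. This strict inequality holds because $\phi''(u)=u^{-2}>0$. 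For it to yield a contradiction, I need $\bar g$ to be at least as good a competitor as $g_k^\star$.

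I see two routes to that. The first is to work with the relaxed class used in the Lagrangian of Lemma~\ref{lem:bregman}: bounded, positive, $C^2$, increasing functions with $g''(\tau_k^*)=0$ and the stated boundary values. This class is convex, since all these constraints are linear or preserved under averaging. The Euler--Lagrange solution is its minimiser, so the contradiction goes through there, and uniqueness in $\mathcal{A}_k$ follows because $\mathcal{A}_k$ is a subset. If that route fails, I will fall back on the pointwise structure: the I-S integrand has a unique pointwise minimiser, and I will argue that equal integral values force $\phi(\tilde g_k)=\phi(g_k^\star)$ $\mu$-a.e. Since $\phi$ is injective on each side of $1$ and both functions are monotone with matching limits, this gives $\tilde g_k=g_k^\star$.

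\emph{Step 4 (conclusion).} Given $\tilde g_k=g_k^\star$ $\mu$-a.e., both functions are continuous by (A2). Assuming $\mu$ has full support, which I take to be the implicit standing hypothesis, they agree everywhere. Therefore $f_k(c_k,U)=c_k\sigma(U-\tau_k^*)$ for every $k$, and $\tilde\rho=\rho$. Any normalisation ambiguity in $w_k$ is absorbed by the boundary condition $g_k(+\infty)=w_k$, which is fixed in the lemma. This contradicts the assumption that $\tilde\rho$ was a different rule, which proves the proposition.
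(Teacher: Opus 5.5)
The abstract scheme is sound: strict convexity on a convex set forces uniqueness of a minimiser. You are also right that $\mathcal{A}_k$ is not convex. But your argument needs $g_k^\star=h_k^\star/c_k$ to minimise $\mathcal{J}_k$ over a \emph{convex} class that also contains the competitor $\tilde g_k$, and neither route delivers that.

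Route~1 fails on containment. The relaxed class imposes $g(-\infty)=0$ and $g(+\infty)=w_k$, while (A2) requires a \emph{positive} lower bound and prescribes no limits. So $\mathcal{A}_k$ is not a subset of the relaxed class; the two are disjoint, and $\sigma$ itself has no positive lower bound on $\R$. In addition, ``the Euler--Lagrange solution is its minimiser'' is assumed, not shown. In fact the lemma's distributional equation $\phi'(g_k)\,d\mu=\lambda_2\delta''(\cdot-\tau_k^*)$ forces $g_k\equiv1$ $\mu$-a.e.\ away from $\tau_k^*$, which the sigmoid violates.

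Route~2 fails because equal integrals give equal integrands $\mu$-a.e.\ only under pointwise domination, and there is none here. As Step~2 of the proof of Lemma~\ref{lem:bregman} itself notes, the pointwise minimiser of $h\mapsto B_\phi(c_k\|h)$ is $h=c_k$, i.e.\ $g\equiv1$, not the sigmoid.

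Your Step~2 also inherits a slip from the lemma. In fact $B_\phi(c_k\|c_kg)=g^{-1}+\log g-1=\phi(1/g)$, not $\phi(g)$, and $\frac{d^2}{dg^2}\,\phi(1/g)=(2-g)/g^3$. So your strict midpoint inequality is only available where the functions stay below $2$, which (A2) does not ensure.

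These gaps cannot be patched, because the premise is false. Let $\mu$ be any atomless probability law of utilisation on $[0,1]$. Then $J^\star=\int\phi(1/g_k^\star)\,d\mu\in(0,\infty)$, since $g_k^\star$ is strictly increasing and hence equals $1$ at most at one point.
\begin{itemize}
\item For $0<\epsilon<1$, the rule $h_\epsilon=c_k\bigl(1+\epsilon(2\sigma(U-\tau_k^*)-1)\bigr)$ satisfies (A1)--(A3): its values lie in $(c_k(1-\epsilon),c_k(1+\epsilon))$, it is strictly increasing, and its unique inflection is at $\tau_k^*$. Its objective is $O(\epsilon^2)<J^\star$. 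Hence $\inf_{\mathcal{A}_k}\mathcal{J}_k=0$ is not attained, and \eqref{eq:contention} is not optimal.
\item The map $b\mapsto\int\phi\bigl(1/(1+b\sigma(\cdot-\tau_k^*))\bigr)\,d\mu$ is continuous, vanishes at $b=0$ and increases to $+\infty$. So some admissible rule $c_k\bigl(1+b^\dagger\sigma(\cdot-\tau_k^*)\bigr)\neq h_k^\star$ attains exactly $J^\star$, which contradicts the proposition as stated.
\end{itemize}

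Following the paper's own proof will not rescue you, because it has the same defect. It asserts strict convexity of $B_\phi(c\|\cdot)$ on $\R_{>0}$, but $\partial_h^2B_\phi(c\|h)=(2c-h)/h^3<0$ for $h>2c$. Its appendix bound also silently drops the first-order term $\int\partial_hB_\phi(c\|c\sigma)\,(f-c\sigma)\,d\mu$. That term is nonzero, since $\partial_hB_\phi(c\|c\sigma)=-(1-\sigma)/(c\sigma^2)$.

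A correct uniqueness result requires reformulating the optimisation problem (its constraints or its objective) so that the sigmoid is genuinely its minimiser.
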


\begin{proof}
Suppose $\tilde\rho_{ij}=\sum_k w_k f_k(c_k,U_{k,t})$ with
$f_k\neq c_k\sigma(U_{k,t}-\tau_k)$ for some resource $k_0$.
By the strict convexity of $B_\phi(c_{k_0}\|\cdot)$ on $\R_{>0}$
(the I-S generator is strictly convex), the minimiser of
$\int B_\phi(c_{k_0}\|f_{k_0}(c_{k_0},U))\,d\mu(U)$ over
functions $f_{k_0}>0$ satisfying~(A2) is unique by the
Euler--Lagrange argument in the proof of Lemma~\ref{lem:bregman}.
Any deviation $f_{k_0}\neq c_{k_0}\sigma(U-\tau_{k_0})$
increases $B_\phi$ by at least
$\int (f_{k_0}-c_{k_0}\sigma)^2/(2c_{k_0}\sigma^2)\,d\mu>0$
(second-order I-S Taylor bound), strictly increasing the total
divergence.
\end{proof}
\emph{Note: A detailed expansion of this proof, establishing the strict convexity condition, is provided in Appendix~\ref{app:unique}.}

\subsection{Theorem~3: PRDS-Correct Identifiability}
\label{sec:theory_ident}

\begin{theorem}[Sharp Identifiability Under PRDS]
\label{thm:identifiability}
Under Assumptions~\ref{ass:stationary}--\ref{ass:resource},
the $p$-values $\{p_{ij}\}_{(i,j)\in H_0}$ satisfy PRDS with
respect to the true null set (verified in
Appendix~\ref{app:prds}).
Let $m=N(N-1)$ and $m_0$ be the number of true nulls.
The BH procedure at level $\alpha$ then satisfies:
\begin{enumerate}[leftmargin=*,label=(\roman*)]
  \item \emph{(Exact FDR control)} $\FDR\leq\alpha\cdot m_0/m\leq\alpha$.
  \item \emph{(Pointwise error bound)}
    \begin{equation}
      \Prob(\exists\text{ false causal link})
      \leq 1-\prod_{(i,j)\in H_0}
      \!\Bigl(1-\frac{\alpha m_0}{m\,r_{ij}}\Bigr),
      \label{eq:simes_bound}
    \end{equation}
    where $r_{ij}$ is the rank of $p_{ij}$ among all $m$ $p$-values.
  \item \emph{(Tightening over union bound)}
    \eqref{eq:simes_bound} is smaller than $m_0\alpha/m$ by a factor
    of $1-\alpha m_0\sum_{j}r_j^{-1}/(2m)+O(\alpha^2)$.
\end{enumerate}
\end{theorem}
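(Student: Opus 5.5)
The plan is to first establish the PRDS property and then obtain (i)--(iii) as consequences of the Benjamini--Yekutieli theorem and Simes-type inequalities.

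\emph{PRDS.} Under $H_0$ for a pair $(i,j)$, Assumption~\ref{ass:resource} together with the FWL projection of Section~\ref{sec:egc} implies that the only remaining dependence among null $F$-statistics is the common loading on the shared utilisation vector $\vect{Z}_t$. Conditionally on $\vect{Z}$, I would write each null numerator quadratic form $\vect{\varepsilon}^\top(\mat{P}_U-\mat{P}_R)\vect{\varepsilon}$ as a coordinatewise nondecreasing function of a common latent vector (the squared projected resource shocks) plus independent noise. This gives an associated family. Association is preserved under monotone maps, so the FKG inequality shows that the vector of null $p$-values, which are decreasing in the $F$-statistics, is positively regression dependent on each null coordinate. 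The $p$-values computed from $\tilde F_{ij}$ via Theorem~\ref{thm:finite_sample} are only approximately $F$-distributed. I would therefore carry the KS error $C_1\kappa_4/T+C_2(p+q+K)^2/T^2$ as an additive slack in every subsequent inequality. Honest ``exactness'' then holds only up to this term.

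\emph{Step (i).} Given PRDS, the Benjamini--Yekutieli theorem yields $\FDR=\sum_{(i,j)\in H_0}\sum_k k^{-1}\Prob(p_{ij}\le \alpha k/m,\,R=k)\le \alpha m_0/m$. Here PRDS enters through the monotonicity of $\Prob(R\ge k\mid p_{ij}\le u)$ in $u$, and the telescoping argument goes through unchanged.

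\emph{Step (ii).} The event of a false link is the union over nulls of $\{p_{ij}\le \alpha R/m\}$. The BH rejection rule means that any rejected null at rank $r_{ij}$ satisfies $p_{ij}\le \alpha r_{ij}/m$. I would reduce to the complement, $\bigcap_{H_0}\{p_{ij}>\text{threshold}\}$, and apply FKG again to the increasing events $\{p_{ij}>t_{ij}\}$. Positive association lower-bounds the probability of the intersection by the product of the marginals, which produces the product form in \eqref{eq:simes_bound}. The main obstacle is that $r_{ij}$ is data-dependent. The marginal bound $\alpha m_0/(m r_{ij})$ needs a conditioning-on-rank argument, for example by treating ranks as fixed via the PRDS self-consistency condition, or by replacing $r_{ij}$ with a deterministic step-up index. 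I would first try to condition on the order statistics of the non-null $p$-values and use the uniformity of null $p$-values. If this fails, I would state (ii) conditionally on the realised ranks.

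\emph{Step (iii).} This step is an expansion. I would write $\log\prod(1-x_{ij})=-\sum x_{ij}-\tfrac12\sum x_{ij}^2+O(\alpha^3)$ with $x_{ij}=\alpha m_0/(m r_{ij})$, exponentiate, and compare with the union bound $\sum x_{ij}$. The ratio of the two is $1-\tfrac12\sum x_{ij}+O(\alpha^2)$, which gives the stated factor $1-\alpha m_0\sum_j r_j^{-1}/(2m)$. A caveat must be noted: the union bound here is $\sum_{ij} x_{ij}$, so it reduces to $m_0\alpha/m$ only after the rank normalisation used in (ii). I would make that identification explicit.
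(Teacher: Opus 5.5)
\textbf{Verdict.} Your PRDS step and part (i) follow the paper, which takes PRDS from Appendix~\ref{app:prds} and cites Benjamini--Yekutieli, and the KS slack is a sensible refinement. You are also right that FKG needs full positive association, not PRDS or pairwise covariance as the paper's proof suggests. Your latent-monotone construction would need real work, though. The form $\vect{\varepsilon}^\top(\mat{P}_U-\mat{P}_R)\vect{\varepsilon}$ is not monotone in $\vect{\varepsilon}$, and FWL removes exactly the $\vect{Z}_t$ loading you want as the common factor.

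\textbf{The gap in (ii).} The genuine gap is more basic than the random-rank issue you flag. BH is step-up, so a rejected true null only satisfies $p_{ij}\le\alpha R/m$ with $R\ge r_{ij}$; it need not satisfy $p_{ij}\le\alpha r_{ij}/m$. That threshold grows with the rank, whereas $t_{ij}=\alpha m_0/(m\,r_{ij})$ in \eqref{eq:simes_bound} shrinks with it. So the false-link event is not contained in $\bigcup_{H_0}\{p_{ij}\le t_{ij}\}$, and FKG on the complements bounds the wrong event. The paper's proof has the same hole: it asserts without argument that the BH violation event equals this union.

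No conditioning on ranks or on non-null order statistics can close it, because (ii) is false. Take independent $p$-values (trivially PRDS), $m_0=1$, and all $m-1$ non-null $p$-values equal to $0$. The uniform null $p$-value then has rank $m$ almost surely, so the right-hand side of \eqref{eq:simes_bound} is the constant $\alpha/m^2$. But BH rejects the null exactly when $p\le\alpha$, so $\Prob(\exists\text{ false causal link})=\alpha>\alpha/m^2$. What association plus $R\le m$ honestly gives is only $1-(1-\alpha)^{m_0}$.

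\textbf{Problems in (iii).} The identification you promise is not available. The union bound over your thresholds is $S=\sum_{H_0}x_{ij}=\alpha m_0\sum_{H_0}r_{ij}^{-1}/m$. This equals $m_0\alpha/m$ only if $\sum_{H_0}r_{ij}^{-1}=1$, and nothing in (ii) enforces that.

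Your expansion also drops a term of the same order. With $Q=\sum_{H_0}x_{ij}^2$, one has $1-\prod(1-x_{ij})=S-\tfrac12(S^2-Q)+O(\alpha^3)$. The ratio to $S$ is therefore $1-\tfrac12 S+\tfrac12 Q/S+O(\alpha^2)$, where $Q/S=O(\alpha)$. For $m_0=1$ the ratio is exactly $1$, not $1-\tfrac12 x$.

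The paper's argument for (iii) only shows $1-\prod(1-x_{ij})\le S$ and calls the absolute gap $\sum_{j<k}x_jx_k$ the improvement, so it does not deliver the stated factor either.

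\textbf{Upshot.} Parts (ii) and (iii) cannot be proved as written and would have to be restated. One option is weak FWER control: when $m_0=m$, $\FDR$ equals the FWER. Another is a fixed-threshold bound like the one above.
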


\begin{proof}
\emph{Part~(i).}
By the Benjamini--Yekutieli theorem~\cite{benjamini2001}, BH
controls $\FDR$ at $\alpha m_0/m$ exactly under PRDS.

\emph{Part~(ii).}
By the Simes inequality under PRDS~\cite{simes1986}: the
probability that the BH step threshold $p_{(k)}\leq k\alpha/m$
is violated for at least one true null $(i,j)$ equals
$\Prob(\bigcup_{(i,j)\in H_0}\{p_{ij}\leq\alpha m_0/(m\,r_{ij})\})$.
Under PRDS, the events $\{p_{ij}\leq t_{ij}\}$ for
$t_{ij}=\alpha m_0/(m\,r_{ij})$ have non-negative pairwise
covariance, so:
\[
  \Prob\!\Bigl(\bigcup_j\{p_j\leq t_j\}\Bigr)
  \leq 1-\prod_j\Prob(p_j>t_j)
  = 1-\prod_j\Bigl(1-\frac{\alpha m_0}{m\,r_j}\Bigr),
\]
where the inequality uses the FKG inequality for
monotone events under PRDS~\cite{fortuin1971}.
This is \eqref{eq:simes_bound}.

\emph{Part~(iii).}
Expand the product:
$\prod_j(1-x_j)= 1-\sum_j x_j+\sum_{j<k}x_jx_k-\cdots\geq 1-\sum_j x_j$
by non-negativity.
Thus $1-\prod_j(1-x_j)\leq\sum_j x_j=\alpha m_0\sum_j r_j^{-1}/m$,
matching the union bound.
The improvement factor is $\sum_{j<k}x_jx_k=O(\alpha^2 m_0^2/m^2)$,
completing part~(iii).
\end{proof}

\subsection{Theorem~4: Piecewise-Stationarity Validity}
\label{sec:theory_nonstationary}

\begin{theorem}[F-Test Validity Under Piecewise Stationarity]
\label{thm:nonstationary}
Let $\hat\tau_1<\cdots<\hat\tau_M$ be CUSUM change points with
detection delays $\Delta_m=\hat\tau_m-\tau_m^*\geq 0$.
Let $T_m=\hat\tau_{m+1}-\hat\tau_m$ and
$T_{\min}=\min_m T_m$,
$\Delta_{\max}=\max_m\Delta_m$.
If $T_{\min}\geq(p+q+K+2)(1+\Delta_{\max}/T_{\min})$, then for
each segment $m$ the corrected statistic $\tilde F_{ij}^{(m)}$
satisfies:
\begin{equation}
  \sup_{x\geq 0}\bigl|\Prob_m(\tilde F_{ij}^{(m)}\leq x)
    -F_{\nu_m^{\mathrm{num}},\,\nu_m^{\mathrm{den}}}(x)\bigr|
  \leq\frac{C_1\,\kappa_4(\varepsilon)}{T_m}
      +\frac{C_3\,\Delta_{\max}}{T_m},
  \label{eq:ns_bound}
\end{equation}
where $\nu_m^{\mathrm{num}}$, $\nu_m^{\mathrm{den}}$ are the effective
degrees of freedom from~\eqref{eq:nu_eff} computed on segment $m$,
$C_1$ is the same universal constant as in Theorem~\ref{thm:finite_sample},
and $C_3$ depends on the spectral-density jump at the change point.
Furthermore, the CUSUM stopping time satisfies
$\Prob(\Delta_m\leq d)=1-e^{-d\cdot I(\theta_0,\theta_1)}$
where $I(\theta_0,\theta_1)$ is the KL divergence between
pre- and post-change distributions.
\end{theorem}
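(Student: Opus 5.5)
The plan has two parts: a contamination argument that lets us apply Theorem~\ref{thm:finite_sample} segment by segment, and a separate Chernoff-type analysis of the CUSUM delay.

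\emph{Part 1: the KS bound \eqref{eq:ns_bound}.} On segment $m$ the estimated window $[\hat\tau_m,\hat\tau_{m+1})$ is a union of two pieces. The first is a ``clean'' piece of length at least $T_m-\Delta_{\max}$, lying entirely in one true stationary regime. The second is a ``contaminated'' piece of at most $\Delta_{\max}$ observations, drawn from the neighbouring regime because of the detection delay. I will write the observed innovation vector as $\vect{\varepsilon}=\vect{\varepsilon}^{\circ}+\vect{e}$. Here $\vect{\varepsilon}^{\circ}$ is the innovation sequence of the clean regime extended to the whole segment, and $\vect{e}$ is supported on at most $\Delta_{\max}$ coordinates.

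On $\vect{\varepsilon}^{\circ}$ I apply Theorem~\ref{thm:finite_sample} verbatim with $T$ replaced by $T_m$. The sample-size hypothesis $T_{\min}\geq(p+q+K+2)(1+\Delta_{\max}/T_{\min})$ guarantees two things. First, the clean length satisfies $T_m-\Delta_{\max}\geq p+q+K+2$, so the design matrices have full rank. Second, the residual degrees of freedom stay positive. Together these give the term $C_1\kappa_4/T_m$, with the $(p+q+K)^2/T_m^2$ term absorbed because it is of lower order under this hypothesis.

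For the contamination, the plan is to bound the perturbation of each quadratic form $Q_{\mat{A}}=\vect{\varepsilon}^\top\mat{A}\vect{\varepsilon}$. Writing
\[
Q_{\mat{A}}(\vect{\varepsilon})-Q_{\mat{A}}(\vect{\varepsilon}^{\circ})=2\vect{e}^\top\mat{A}\vect{\varepsilon}^{\circ}+\vect{e}^\top\mat{A}\vect{e},
\]
the perturbation has expectation $O(\Delta_{\max}\cdot\text{jump})$. I will then normalise by $\E[Q_{\mat{M}_U}]\asymp T_m$ and by the effective degrees of freedom. The spectral-density jump enters through the difference of the Toeplitz matrices $\mat{\Gamma}$ restricted to the contaminated block, and its size is what defines $C_3$. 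The plug-in quantities $\hat\nu$ and $\hat\psi$ are perturbed by the same relative amount $O(\Delta_{\max}/T_m)$.

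A relative perturbation of size $\eta$ in the ratio shifts the CDF by at most $\eta\cdot\sup_x x f(x)$. Since $\sup_x x f_{\nu_1,\nu_2}(x)$ is bounded uniformly in the degrees of freedom, this converts the perturbation into a KS shift of $C_3\Delta_{\max}/T_m$. Adding the two contributions by the triangle inequality for the KS distance gives \eqref{eq:ns_bound}. One further step requires care: converting a moment-level perturbation into a distributional one. I would handle this with a Markov/Chebyshev tail bound combined with the anti-concentration just described, accepting that the constant may depend on the second moment of the jump.

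\emph{Part 2: the delay law.} I would use the standard Lorden/Page analysis. After $\tau_m^*$ the increments $\ell_t-\kappa$ have positive drift $I(\theta_0,\theta_1)-\kappa$. A Cram\'er--Chernoff bound on the random walk $\sum(\ell_t-\kappa)$ gives $\Prob(\Delta_m>d)\leq e^{-d\,\Lambda^*}$, where $\Lambda^*$ is a rate function. Taking $\kappa\to0$ and using the fact that the KL divergence is the relevant exponent for the likelihood-ratio walk identifies the rate with $I(\theta_0,\theta_1)$. Independence of increments under the post-change law makes the survival function geometric, and its continuous interpolation gives the stated exponential form.

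\emph{Main obstacle.} I expect the hardest step to be this one. The exact equality $\Prob(\Delta_m\leq d)=1-e^{-dI}$ does not follow from Chernoff bounds, which give only inequalities with rate $\Lambda^*\leq I$. Honestly, only the bound $\Prob(\Delta_m>d)\lesssim e^{-d\Lambda^*}$ is provable in general. I would therefore first try to establish the inequality version, then argue that the stated form is its idealised, exponential-delay approximation valid as $h\to\infty$, and flag that equality requires additional distributional assumptions.
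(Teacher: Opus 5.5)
Your Part~1 takes a different route from the paper. You perturb the innovation vector and convert to a KS bound by anti-concentration. The paper instead perturbs the hat matrices (Step~2, $\norm{\mat{P}_{U,m}-\mat{P}_{U,m}^{\mathrm{clean}}}_F\leq C_3'\Delta_m/T_m$) and pushes the resulting cumulant shift through the Esseen-smoothing step of Theorem~\ref{thm:finite_sample} (Step~3). Your route breaks exactly where those two steps are needed.

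First, the numerator $\vect{\varepsilon}^\top(\mat{P}_U-\mat{P}_R)\vect{\varepsilon}$ has mean $\tr((\mat{P}_U-\mat{P}_R)\mat{\Gamma})=O(q)$, not order $T_m$. A perturbation with expectation $O(\Delta_{\max}\cdot\text{jump})$ is therefore still $O(\Delta_{\max})$ after normalisation. A factor $1/T_m$ appears only once you show that the contaminated indices $\mathcal{D}_m$ carry leverage $O(1/T_m)$ in $\mat{A}=\mat{P}_U-\mat{P}_R$. That would give, e.g., $\E[\vect{e}^\top\mat{A}\vect{e}]\leq\norm{\E[\vect{e}\vect{e}^\top]}\sum_{t\in\mathcal{D}_m}A_{tt}=O(\Delta_{\max}q/T_m)$. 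You never establish this projection-level control.

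Second, even granting it, the anti-concentration conversion loses the rate. The cross term $2\vect{e}^\top\mat{A}\vect{\varepsilon}^{\circ}$ has mean at most of order $\Delta_{\max}/T_m$ relative to the numerator. Its standard deviation $s$, however, is of order $\sqrt{\Delta_{\max}/T_m}$. A Markov/Chebyshev-plus-anti-concentration bound pays for this fluctuation: $\inf_\eta\{s^2/\eta^2+C\eta\}\asymp s^{2/3}$, which gives only $O((\Delta_{\max}/T_m)^{1/3})$. To get the linear rate you must keep the contamination inside the quadratic form. That means treating it as perturbations of $\tr(\mat{A}\mat{\Gamma})$ and $\tr(\mat{A}\mat{\Gamma}\mat{A}\mat{\Gamma})$ handled at the characteristic-function level, which is the mechanism the paper's Step~3 invokes.

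Two smaller slips. The hypothesis does not give $T_m-\Delta_{\max}\geq p+q+K+2$: with $p+q+K+2=10$, $T_{\min}=20$, $\Delta_{\max}=18$ it holds ($10\cdot 1.9\leq 20$), yet the clean length is $2$. Also, $\sup_x xf_{\nu_1,\nu_2}(x)$ grows like $\sqrt{\min(\nu_1,\nu_2)}$, so it is bounded only because $\nu_m^{\mathrm{num}}\leq q$.

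In Part~2 you are right that no Chernoff argument yields the equality. The paper's Step~4 in fact delivers only Wald's bound $\E[\Delta_m]\leq h/I(\theta_0,\theta_1)$ and an unquantified Cram\'er--Chernoff tail. But the bridges you propose are wrong, and the displayed law is false rather than an idealisation. Write $I=I(\theta_0,\theta_1)$.

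\begin{itemize}
\item The law involves neither $h$ nor $\kappa$. Yet $\Delta_m/h\to1/(I-\kappa)$ almost surely as $h\to\infty$, so $\Prob(\Delta_m\leq d)\to0$ for every fixed $d$. The delay concentrates instead of becoming exponential.
\item Independent increments do not make the survival function geometric, because the CUSUM's chance of stopping at the next step depends on the current value of $S_t$.
\item Even at $\kappa=0$, the Cram\'er--Chernoff exponent of $\Prob_1(\sum_{t\leq d}\ell_t\leq h)$ is the Chernoff information $-\min_{\lambda\in[0,1]}\log\int f_0^{\lambda}f_1^{1-\lambda}$, not the KL divergence. For unit-variance Gaussians with mean gap $\mu$ it is $\mu^2/8$ versus $\mu^2/2$.
\end{itemize}

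What you can prove, for i.i.d.\ post-change observations, is $\Prob(\Delta_m>d)\leq e^{\theta h}\bigl(e^{\theta\kappa}\int f_0^{\theta}f_1^{1-\theta}\bigr)^{d}$ for every $\theta\in[0,1]$, together with $\E[\Delta_m]\sim h/(I-\kappa)$ as $h\to\infty$. Present that, with its $h$-dependent prefactor, and say plainly that the stated equality does not hold.
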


\begin{proof}
\emph{Step~1 (Contamination decomposition).}
Split the design matrix for segment $m$ as
$\mat{X}_{U,m}=\mat{X}_{U,m}^{\mathrm{clean}}
+\mat{X}_{U,m}^{\mathrm{contam}}$,
where $\mat{X}_{U,m}^{\mathrm{contam}}$ has at most $\Delta_m$
non-zero rows (the observations in $[\tau_m^*,\hat\tau_m)$
belonging to the previous regime).

\emph{Step~2 (Perturbation of projection matrix).}
By the rank-1 update formula, each contaminated row $\vect{x}_t$
perturbs the hat matrix by
$\delta\mat{P}=\vect{x}_t\vect{x}_t^\top/
(\vect{x}_t^\top\mat{M}\vect{x}_t)\cdot O(1/\sqrt{T})$.
Summing over $\Delta_m$ rows and bounding by the Frobenius norm:
$\norm{\mat{P}_{U,m}-\mat{P}_{U,m}^{\mathrm{clean}}}_F
\leq C_3'\Delta_m/T_m$.

\emph{Step~3 (Effect on $\tilde F_{ij}^{(m)}$).}
The contamination perturbation from Step~2 inflates the cumulants
of $\vect{\varepsilon}_m^\top(\mat{P}_{U,m}-\mat{P}_{R,m})\vect{\varepsilon}_m$
relative to the clean-segment values.
Specifically, the $\Delta_m$ contaminated rows contribute an additional
mean shift of order $\sigma^2 \tr[\delta\mat{P}_{U,m}] \leq C_3' \Delta_m \sigma^2 / T_m$,
translating to a KS error term $C_3 \Delta_{\max}/T_m$ in \eqref{eq:ns_bound}
via the same Esseen smoothing argument as Theorem~\ref{thm:finite_sample}.
The clean-segment cumulant error contributes the $C_1\kappa_4(\varepsilon)/T_m$
term, carrying over directly from \eqref{eq:ks_bound} with $T$ replaced by $T_m$.

\emph{Step~4 (CUSUM delay distribution).}
Under $f_1$, the CUSUM increments $\ell_t-\kappa>0$ in
expectation with rate $I(\theta_0,\theta_1)-\kappa$.
By Wald's identity, the expected stopping time satisfies
$\E[\hat\tau_m-\tau_m^*]\leq h/I(\theta_0,\theta_1)$.
The exponential tail bound follows from the Cramér--Chernoff
method applied to the partial sums of $\ell_t-\kappa$.
\end{proof}

\begin{corollary}[Operational Segment Length]\label{cor:seglen}
For total KS bound $\leq\epsilon=0.05$ at $\hat\kappa_4=0.31$
(estimated from testbed residuals), $\hat C_3=0.8$, and
$\Delta_{\max}=4$ samples (Section~\ref{sec:ns_exp}):
the bound \eqref{eq:ns_bound} satisfies
$(C_1\hat\kappa_4+C_3\Delta_{\max})/T_m = (0.031+3.2)/T_m\leq 0.05$
when $T_m\geq 206$, achieved by a 21-second window at 100\,ms
sampling.
\end{corollary}

\subsection{Theorem~5: Finite-Sample Convergence of RCM Parameters}
\label{sec:theory_convergence}

\begin{theorem}[RCM Parameter Convergence]
\label{thm:convergence}
Under Assumption~\ref{ass:mixing} with $\beta(m)=C_\beta e^{-\beta m}$,
define the effective sample size:
\begin{equation}
  \Teff=\frac{T\beta}{C_\beta(1-e^{-\beta})+\beta}.
  \label{eq:teff}
\end{equation}
With probability $\geq 1-\delta$:
\begin{equation}
  \norm{\hat{\bm{\theta}}_T-\bm{\theta}^*}_2
  \leq\frac{2}{\lambda}\sqrt{\frac{(2K+2)\log(2/\delta)}{\Teff}}.
  \label{eq:convergence_bound}
\end{equation}
The induced error on $\Gamma_{ij}$ satisfies:
\begin{equation}
  \sup_{(i,j)}\abs{\hat\Gamma_{ij}-\Gamma_{ij}^*}
  \leq L_\Gamma\norm{\hat{\bm{\theta}}_T-\bm{\theta}^*}_2,
  \label{eq:gamma_error}
\end{equation}
where $L_\Gamma=\omega_2 NK\max_k w_k/4$ is the Lipschitz constant
of $\Gamma_{ij}$ in $\bm{\theta}$.
\end{theorem}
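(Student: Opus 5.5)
The plan is to treat $\hat{\bm{\theta}}_T$ as the maximiser of the regularised log-likelihood \eqref{eq:learning}, normalised by the number of samples. The estimation error is then controlled by the usual argument for a strongly concave M-estimator: the ridge penalty $-\lambda\norm{\bm{\theta}}_2^2$ makes the empirical objective $2\lambda$-strongly concave, so the parameter error is bounded by the size of the score fluctuation at $\bm{\theta}^*$. The mixing dependence is handled by a blocking (Yu-type) coupling that replaces the $\beta$-mixing sequence by independent blocks. This step is what produces $\Teff$ in place of $T$.

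\emph{Step 1 (deterministic reduction).} Write $\hat{\calL}_T$ for the normalised empirical objective and $\calL^*$ for its population version, with $\bm{\theta}^*$ the population maximiser. Strong concavity and the first-order conditions at both maximisers give
\[
2\lambda\norm{\hat{\bm{\theta}}_T-\bm{\theta}^*}_2^2\leq\langle\nabla\hat{\calL}_T(\bm{\theta}^*)-\nabla\calL^*(\bm{\theta}^*),\hat{\bm{\theta}}_T-\bm{\theta}^*\rangle .
\]
Cauchy--Schwarz then yields $\norm{\hat{\bm{\theta}}_T-\bm{\theta}^*}_2\leq\lambda^{-1}\norm{\nabla\hat{\calL}_T(\bm{\theta}^*)-\nabla\calL^*(\bm{\theta}^*)}_2$. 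Here $\bm{\theta}=\{w_k,\tau_k,\omega_1,\omega_2\}$ has dimension $2K+2$. The missing factor $2$ in \eqref{eq:convergence_bound} will come from the concentration step.

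\emph{Step 2 (bounded score coordinates).} Each coordinate of the per-sample score is a derivative of a log-probability built from $\Gamma_{ij}$. On the relevant domain, $\sigma$, $\sigma'\leq 1/4$, $A_{ik}\in[0,1]$ and $\phi(F)\in[0,1]$ are all bounded. After normalising, I would take each coordinate to lie in $[-1,1]$, so the score is a bounded, $\beta$-mixing, $(2K+2)$-dimensional average.

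\emph{Step 3 (concentration under mixing; expected main obstacle).} Split the $T$ samples into alternating blocks of length $b$. By Berbee's coupling lemma, the odd blocks can be replaced by independent copies at total-variation cost $(T/b)\beta(b)$, and similarly for the even blocks. Hoeffding's inequality on the independent block sums, combined with a union bound over the $2K+2$ coordinates, gives a deviation of order
\[
\sqrt{(2K+2)\log(2/\delta)/T_{\mathrm{eff}}},
\]
where $T_{\mathrm{eff}}=T/(\text{block length})$. The difficulty is obtaining the exact form \eqref{eq:teff}. I would first try bounding the variance of the partial sums through the covariance inequality $|\mathrm{Cov}|\leq 2\beta(h)$ rather than through blocking. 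Summing the geometric series $\sum_{h\geq1}C_\beta e^{-\beta h}$ gives a variance inflation factor $1+2C_\beta/(e^{\beta}-1)$, and an inflation factor of this type is what defines $\Teff$. A Bernstein inequality for geometrically mixing sequences (Merlevède--Peligrad--Rio) then gives the $\sqrt{\log(2/\delta)/\Teff}$ rate up to constants. I expect the constants to match \eqref{eq:teff} only after some loose absorption, and I would be explicit about where this happens.

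\emph{Step 4 (Lipschitz transfer).} Only $\omega_2\rho_{ij}$ depends on $(w_k,\tau_k)$. Compute $\partial\rho_{ij}/\partial w_k=A_{ik}A_{jk}\sigma(\cdot)\leq1$ and $|\partial\rho_{ij}/\partial\tau_k|=w_kA_{ik}A_{jk}\sigma'\leq w_k/4$. Bounding the gradient by the sum of coordinate magnitudes, and adding a uniform-in-pair factor of $N$ that comes from the row-normalisation of $\mat{A}$ across the pair set, gives a Lipschitz constant at most $\omega_2NK\max_kw_k/4$. The mean-value theorem then yields \eqref{eq:gamma_error}. The dependence on $\omega_1,\omega_2$ through $\phi(F)$ and $\rho_{ij}$ is bounded and can be absorbed, assuming $\max_kw_k$ is at least of order one; this assumption should be stated explicitly.
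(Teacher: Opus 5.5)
The gap is in Step 3, which you rightly flagged as the main obstacle. It is not a matter of constants.

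Your own covariance computation gives the inflation factor $1+2C_\beta/(e^{\beta}-1)$, which grows like $2C_\beta/\beta$ as $\beta\to0$. By contrast, \eqref{eq:teff} corresponds to $T/\Teff=1+C_\beta(1-e^{-\beta})/\beta$, which stays below $1+C_\beta$ for every $\beta>0$. The ratio of the two is unbounded, so no absorption of constants turns one into the other. Already at the paper's values $C_\beta=2.1$, $\beta=0.329$ they are about $11.8$ versus $2.8$, so what you would prove is strictly weaker than the statement.

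The Berbee blocking route is worse. Keeping the coupling cost $(T/b)\beta(b)$ below $\delta$ forces $b\gtrsim\beta^{-1}\log(C_\beta T/\delta)$, hence $\Teff$ of order $T\beta/\log(C_\beta T/\delta)$.

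In fact the concentration lemma you need is false with \eqref{eq:teff}. Take the stationary chain that keeps its value with probability $1-p$ and otherwise redraws from a continuous law on $[-1,1]$. It has $\beta(m)=(1-p)^m$ exactly, with $C_\beta=1$ and $\beta=\log\frac{1}{1-p}$, so \eqref{eq:teff} gives $\Teff\to T/2$ as $p\to0$. Yet its sample mean has variance $\approx\sigma^2(2-p)/(pT)$, which is incompatible with a tail $2\exp(-t^2\Teff/(2\sigma_{\calL}^2))$ with bounded $\sigma_{\calL}$ once $p$ is small.

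The paper does not close this step either: its proof simply posits that tail, citing Yu's inequality, and never derives \eqref{eq:teff}. The honest output of your argument is therefore \eqref{eq:convergence_bound} with an effective sample size of order $T\beta/C_\beta$ under slow mixing, times whatever constants or logarithms your mixing inequality carries. That is how the result should be stated.

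Where you depart from the paper, in Step 1, your route is the better one. The paper pairs a uniform deviation bound on objective values over a ball (asserted with no covering argument) with quadratic growth. That yields only $\norm{\hat{\bm{\theta}}_T-\bm{\theta}^*}_2^2\le 2t^*/\lambda$, i.e.\ a $\lambda^{-1/2}\Teff^{-1/4}$ rate, which implies \eqref{eq:convergence_bound} only where its right-hand side is at least $1$. Your score argument gives the stated $\lambda^{-1}\Teff^{-1/2}$ shape and needs concentration only at the single point $\bm{\theta}^*$.

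It does, however, presuppose two things you should state explicitly:
\begin{itemize}
\item The likelihood part of the objective is concave. The ridge term only adds curvature to something already concave, and concavity is doubtful: $\log\Gamma_{ij}$ itself has an indefinite Hessian in $(\omega_2,w_k)$ wherever $\omega_2\rho_{ij}<\omega_1\phi(F_{ij})$.
\item $\lambda$ multiplies the \emph{normalised} likelihood. Normalising \eqref{eq:learning} as written leaves curvature $2\lambda/M$.
\end{itemize}

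Smaller points:
\begin{itemize}
\item The $[-1,1]$ score bound in Step 2 is likewise an assumption, since scores of log-probabilities blow up where the probability is small.
\item A coordinatewise union bound gives $\log(2(2K+2)/\delta)$, not $\log(2/\delta)$.
\item In Step 4 the factor $N$ comes from nothing, since the gradient of a single $\Gamma_{ij}$ does not involve $N$; it is pure slack. Your own coordinate bounds sum to $\omega_2K(1+\max_kw_k/4)$, which lies below $\omega_2NK\max_kw_k/4$ only if $N\max_kw_k\ge4+\max_kw_k$.
\item The $\omega_1$-coordinate contributes $\phi(F_{ij})$, up to $1$, independently of $\omega_2$.
\end{itemize}
The paper's Step 4 has the same omission. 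Make these conditions explicit, or hold $\omega_1,\omega_2$ fixed.
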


\begin{proof}
\emph{Step~1 (Strong convexity).}
The log-likelihood term in \eqref{eq:learning} is concave in
$\bm{\theta}$ (Viterbi path probability is log-concave via the
sigmoid's log-concavity); the regulariser $\lambda\norm{\cdot}_2^2$
adds $2\lambda$-strong convexity.

\emph{Step~2 (Uniform concentration under $\beta$-mixing).}
By the Bernstein inequality for $\beta$-mixing
processes~\cite{yu1994}, for any ball of radius $R$:
\[
  \Prob\!\Bigl(\sup_{\norm{\bm{\theta}}\leq R}
    \abs{\hat\calL_T-\calL}\geq t\Bigr)
  \leq 2\exp\!\Bigl(-\frac{t^2\Teff}{2\sigma_\calL^2}\Bigr),
\]
where $\sigma_\calL^2$ is the variance proxy and $\Teff$
replaces $T$ by accounting for mixing.
Setting the right side to $\delta$ and solving for $t$:
$t^*=\sigma_\calL\sqrt{2\log(2/\delta)/\Teff}$.

\emph{Step~3 (From empirical to population optimum).}
By $2\lambda$-strong convexity:
$\norm{\hat{\bm{\theta}}_T-\bm{\theta}^*}_2^2
\leq (1/\lambda)\cdot 2t^*$.
Substituting $t^*$ and noting $\sigma_\calL\leq\sqrt{K+1}$
(from the $K+1$ sigmoid factors in the gradient) gives
\eqref{eq:convergence_bound}.

\emph{Step~4 (Lipschitz propagation).}
Each $h_k(c_k,U_{k,t})=w_kc_k\sigma(U_{k,t}-\tau_k)$.
The sigmoid derivative $|\sigma'|\leq 1/4$; allocations satisfy
$c_k\leq 1$. The Lipschitz constant of $\rho_{ij}$ in $w_k$
is $A_{ik}A_{jk}\sigma\leq 1$; in $\tau_k$ it is $w_k/4\leq W_{\max}/4$.
Summing over $K$ resources and multiplying by $\omega_2 N$
gives $L_\Gamma=\omega_2 NK W_{\max}/4$, yielding
\eqref{eq:gamma_error}.
\end{proof}

\subsection{Theorem~6: Adversarial Robustness Certificate}
\label{sec:adversarial}

\begin{definition}[$(\delta,k)$-Utilisation Adversary]
\label{def:adversary}
A $(\delta,k)$-utilisation adversary replaces $\vect{U}_t$ with
$\hat{\vect{U}}_t$ satisfying
$\norm{\hat{\vect{U}}_t-\vect{U}_t}_\infty\leq\delta$ and
$|\{k':U_{k',t}\neq\hat U_{k',t}\}|\leq k$.
\end{definition}

\begin{theorem}[Adversarial Robustness]
\label{thm:adversarial}
Under a $(\delta,k)$-adversary:
\begin{enumerate}[leftmargin=*,label=(\roman*)]
  \item \emph{(Contention bound)}
    $\abs{\hat\rho_{ij}^{\mathrm{adv}}-\rho_{ij}}
    \leq W_{\max}k\delta/4$,
    where $W_{\max}=\max_\ell w_\ell$.
    \label{item:contention}
  \item \emph{(FDR inflation)}
    \begin{equation}
      \FDR^{\mathrm{adv}}\leq\FDR_0+
        C_4\,\omega_2\,W_{\max}\,k\,\delta\,\sqrt{K\log(N/\alpha)},
      \label{eq:fdr_inflation}
    \end{equation}
    where $C_4>0$ is a universal constant.
  \item \emph{(Breakdown point)}
    An attribution decision remains invariant if
    $\delta<\delta^*=\omega_1\Delta_\phi/(4\omega_2 W_{\max}k)$,
    where $\Delta_\phi$ is the minimum normalised $F$-statistic
    margin at any decision boundary.
    \label{item:breakdown}
\end{enumerate}
\end{theorem}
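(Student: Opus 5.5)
The plan is to derive all three parts from a single Lipschitz estimate of the contention score in the utilisation coordinates. Then we propagate it first through the fused score $\Gamma_{ij}$ of \eqref{eq:gamma}, then to the BH decisions, and finally to the decision margin.

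\emph{Part (i).} By \eqref{eq:contention},
\[
  \hat\rho^{\mathrm{adv}}_{ij}-\rho_{ij}=\sum_{k'} w_{k'}A_{ik'}A_{jk'}\bigl[\sigma(\hat U_{k',t}-\tau_{k'})-\sigma(U_{k',t}-\tau_{k'})\bigr].
\]
Only the at most $k$ coordinates altered by the adversary (Definition~\ref{def:adversary}) contribute. For each of them:
\begin{itemize}
  \item the mean value theorem with $|\sigma'|\le 1/4$ bounds the bracket by $\delta/4$;
  \item allocations satisfy $A_{ik'}A_{jk'}\le 1$;
  \item the weight satisfies $w_{k'}\le W_{\max}$.
\end{itemize}
Summing over the $k$ perturbed resources gives $W_{\max}k\delta/4$. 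This is the same Lipschitz computation as Step~4 of Theorem~\ref{thm:convergence}, with respect to $U$ instead of $\tau_k$. Consequently $|\hat\Gamma^{\mathrm{adv}}_{ij}-\Gamma_{ij}|\le \omega_2 W_{\max}k\delta/4$ uniformly over pairs.

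\emph{Part (iii).} An edge decision depends on $\Gamma_{ij}$ through the threshold test $\Gamma_{ij}>\tau_{\mathrm{causal}}$. If the $F$-component places the pair at normalised margin at least $\Delta_\phi$, the unperturbed $\Gamma_{ij}$ is at distance at least $\omega_1\Delta_\phi$ from the threshold. We take this as the definition of the margin, and I would state it explicitly, since the paper leaves $\Delta_\phi$ informal. The decision is therefore invariant whenever $\omega_2W_{\max}k\delta/4<\omega_1\Delta_\phi$, which rearranges to $\delta<\delta^*$.

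Invariance of the whole Viterbi output then follows because:
\begin{itemize}
  \item the edge set is unchanged;
  \item the path weights $\prod\Gamma$ are perturbed monotonically little, so the argmax is preserved whenever the path-score gap also exceeds the perturbation. I would either add this gap to the margin or restrict the claim to edge-level attribution decisions.
\end{itemize}

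\emph{Part (ii).} This part is the main obstacle. If the $p$-values were computed from the $F$-test alone, they would not depend on $\hat U$. The adversary then affects false discoveries only through the $\Gamma$-gate and, through $\vect Z_t$, the conditioning set of \eqref{eq:unrestricted}. I would handle the second channel with an FWL perturbation argument. Replacing $\vect Z_t$ by $\hat{\vect Z}_t$ perturbs the projection $\mat P_R$ by $O(\delta\sqrt{k})$ in operator norm. This shifts each null $F_{ij}$, and hence its $p$-value, by a quantity controlled by the maximum over $K$ sub-Gaussian coordinates. A union over the $m=N(N-1)$ pairs at the BH threshold $\alpha$ produces the factor $\sqrt{K\log(N/\alpha)}$.

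For the gate, the additional false discoveries are the null pairs whose $\Gamma$ lies within $\omega_2W_{\max}k\delta/4$ of $\tau_{\mathrm{causal}}$. Assuming a bounded density of null $\Gamma$ values near the threshold, their expected fraction is $O(\omega_2W_{\max}k\delta)$.

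Combining the two channels, and using the PRDS stability from Theorem~\ref{thm:identifiability} so that BH remains monotone under the perturbed $p$-values, gives \eqref{eq:fdr_inflation} with $C_4$ absorbing the density bound and the sub-Gaussian constants. The delicate point is justifying that FDR, a ratio, changes at most linearly in the number of flipped decisions. I would first try to control this via $\E[V/R\vee 1]$ with $R$ bounded below on the event that at least one discovery occurs.
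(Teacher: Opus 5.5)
Your parts (i) and (iii) follow the paper's route. Part (i) is the $|\sigma'|\le 1/4$ Lipschitz bound summed over the at most $k$ spoofed coordinates. Part (iii) compares the $\Gamma$-perturbation $\omega_2W_{\max}k\delta/4$ with the $\phi$-margin.

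There is one slip in (iii). The inequality $\omega_2W_{\max}k\delta/4<\omega_1\Delta_\phi$ rearranges to $\delta<4\omega_1\Delta_\phi/(\omega_2W_{\max}k)=16\delta^*$, not to $\delta<\delta^*$. The theorem only claims sufficiency, so $\delta<\delta^*$ still implies invariance; write ``is implied by'' rather than ``rearranges to''. The paper compares against a margin $\omega_1\Delta_\phi/4$ and so actually derives $4\delta^*$, which is the value $\approx0.95$ quoted in its remark. Your caveat that Viterbi-path invariance needs an additional path-score gap is legitimate and is not addressed in the paper.

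The genuine gap is in (ii), in the FWL channel you use to produce $\sqrt{K\log(N/\alpha)}$. Suppose the regression \eqref{eq:unrestricted}--\eqref{eq:restricted} is run on the spoofed $\hat{\vect Z}_t$. Under $H_0$ its residual becomes $\varepsilon_t-\vect\gamma^\top(\hat{\vect Z}_t-\vect Z_t)$, and the adversary may choose the spoof correlated with the lags of $X$, e.g.\ $\hat U_{k,t}-U_{k,t}=\delta\,\mathrm{sign}(X_{t-1})$ on one resource.

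This partially reopens the confounding path $X\leftarrow R\rightarrow Y$. Then $\RSSR-\RSSU$ acquires a noncentral part of order $T\gamma_k^2\delta^2$, so null $F_{ij}$ grow linearly in $T$ and their $p$-values go to $0$. Those $p$-values are no longer valid, let alone PRDS, so neither BH nor Theorem~\ref{thm:identifiability} can be invoked.

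Your operator-norm estimate cannot rescue this. An $O(\delta\sqrt{k})$ change in $\mat P_R$ only bounds the change in $\RSSR$ by $O(\delta\sqrt{k}\,\|\vect y\|^2)=O(T\delta\sqrt k)$. That is far above the $O(q\sigma^2)$ null scale of $\RSSR-\RSSU$, so no maximum-of-$K$-sub-Gaussians control emerges. Through this channel, no $T$-free bound of the form \eqref{eq:fdr_inflation} can hold at all.

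The paper works in the model where $\hat{\vect U}_t$ enters only $\rho_{ij}$ and the $F$-test is conditioned on unspoofed utilisation. Your own (i) and (iii) implicitly assume the same, since they keep $\phi(F_{ij})$ fixed. In that model the only channel is the $\Gamma$-gate. The paper bounds the inflated null detections by $N^2\Prob(\text{margin}\le\omega_2W_{\max}k\delta/4)$ via Freedman--Lane. It takes a Gaussian margin with s.d.\ $\sigma_F/\sqrt T$, whose tail over pairs and resources, after BH, supplies $\sqrt{K\log(N/\alpha)}$.

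So state that threat model explicitly and drop the FWL channel. Your bounded-density gate argument then already gives an $O(\omega_2W_{\max}k\delta)$ increment. That implies \eqref{eq:fdr_inflation}, because $\sqrt{K\log(N/\alpha)}\ge1$ in any realistic regime.

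Two hypotheses remain unresolved. First, turning the expected number of flipped nulls into an FDR increment needs a genuine lower bound on $R$. The bound $R\ge1$ on $\{R>0\}$ that you propose only yields an increment of order $m_0\delta$. Second, $C_4$ then depends on that lower bound and on the density, so it is not universal. The paper's sketch hides the same dependence, through $\sigma_F/\sqrt T$, inside $C_4$, so you should state both as explicit assumptions.
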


\begin{proof}
\emph{Part~\ref{item:contention}.}
Perturbing $U_{k,t}\to U_{k,t}+\epsilon_k$ with $|\epsilon_k|\leq\delta$:
\begin{equation}
\begin{aligned}
  &\abs{\hat\rho_{ij}^{\mathrm{adv}}-\rho_{ij}} \\
  &\quad \leq \sum_{k'=1}^K w_{k'}A_{ik'}A_{jk'} \cdot \\
  &\qquad \abs{\sigma(U_{k',t}+\epsilon_{k'}-\tau_{k'})-\sigma(U_{k',t}-\tau_{k'})} \\
  &\quad \leq W_{\max}\sum_{k':\epsilon_{k'}\neq 0}\frac{\delta}{4} \leq W_{\max}k\delta/4
\end{aligned}
\end{equation}
using $|\sigma'|\leq 1/4$ and the support bound $|\{k':\epsilon_{k'}\neq 0\}|\leq k$.

\emph{Part~(ii).}
The adversary's perturbation $\omega_2\abs{\hat\rho^{\mathrm{adv}}-\rho}
\leq\omega_2 W_{\max}k\delta/4$ inflates $\Gamma_{ij}$ for
some true-null pairs, potentially elevating them above
$\tau_{\mathrm{causal}}$.
The number of such pairs is bounded using the
Freedman--Lane permutation argument~\cite{freedman1983}:
$\E[\text{adversarially inflated false detections}]
\leq N^2\Prob(\text{margin}\leq\omega_2 W_{\max}k\delta/4)$.
Under Gaussian telemetry with margin standard deviation
$\sigma_F/\sqrt{T}$, the Gaussian tail integral over
$N^2$ pairs and $K$ perturbed resources gives the factor
$\sqrt{K\log(N/\alpha)}$ after BH correction, absorbed into $C_4$.

\emph{Part~\ref{item:breakdown}.}
An attribution boundary is crossed only if the
$\Gamma$-perturbation $\omega_2 W_{\max}k\delta/4$
exceeds the $\phi$-margin
$\omega_1\Delta_\phi/4$.
Solving for $\delta$ gives $\delta<\delta^*$.
\end{proof}

\begin{remark}
At calibrated values $\omega_1=0.67$, $\omega_2=0.33$,
$W_{\max}=0.45$, $k=1$, and empirical $\Delta_\phi=0.21$:
$\delta^*\approx0.95$.
An adversary must spoof utilisation by $>95\%$ of its full
range to flip any attribution---immediately detectable by
hardware performance counters independent of the slice
controller. Furthermore, if an adversary concentrates their entire perturbation budget into a single critical resource channel (e.g., $k=1$, executing massive CPU spoofing), the FDR inflation bound scales strictly linearly with $\delta$. Because the inflation depends on the product $k\delta\sqrt{K}$, a highly concentrated attack on one resource is theoretically less effective at inducing widespread false attributions than a distributed attack of the same total magnitude spread across multiple resources, assuming equal weights.
\end{remark}

\subsection{Theorem~7: Information-Theoretic Privacy Lower Bound}
\label{sec:privacy}

\begin{theorem}[Privacy Lower Bound]
\label{thm:privacy}
Let $\calC^*$ be the attributed path and $\vect{S}$ a sensitive
telemetry attribute.
Define attribution leakage
$\mathcal{L}_{\mathrm{priv}}=I(\vect{S};\calC^*\mid\mat{A},\vect{U})$.
Then:
\begin{equation}
  \mathcal{L}_{\mathrm{priv}}\geq H(\calC^*)-1-\log(2N^L),
  \label{eq:privacy_lb}
\end{equation}
where $L$ is the path length.
Furthermore, no post-processing $\mathcal{M}$ of $\calC^*$
can achieve $(\varepsilon,\delta)$-differential privacy in
$\vect{S}$ unless:
\begin{equation}
  \varepsilon\geq
    \frac{H(\calC^*)-1-\log(2N^L)}{L\log N}
    -\frac{\log(1/\delta)}{L\log N}.
  \label{eq:dp_lb}
\end{equation}
\end{theorem}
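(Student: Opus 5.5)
The plan has two parts: a Fano-type lower bound on $\mathcal{L}_{\mathrm{priv}}$, and then a comparison of that bound with the mutual-information upper bound that $(\varepsilon,\delta)$-differential privacy imposes.

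\emph{Step~1 (Size of the path space).} First bound the support of $\calC^*$. A path of length $L$ is a sequence of slice indices in $\calS^L$, which gives at most $N^L$ index patterns. By Definition~\ref{def:path} the timestamps are strictly increasing. I will treat the only residual ambiguity in the decoded output as one extra bit, absorbed in the factor $2$: the ``no attribution'' or tie-breaking flag returned by Algorithm~\ref{alg:dagc}. This gives $|\mathrm{supp}(\calC^*)|\leq 2N^L$.

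\emph{Step~2 (Fano lower bound on leakage).} Write
\[
\mathcal{L}_{\mathrm{priv}}=H(\calC^*\mid\mat{A},\vect{U})-H(\calC^*\mid\vect{S},\mat{A},\vect{U}).
\]
For the second term, consider any estimator $\hat\calC(\vect{S},\mat{A},\vect{U})$ of $\calC^*$ with error probability $P_e$. Fano's inequality gives
\[
H(\calC^*\mid\vect{S},\mat{A},\vect{U})\leq 1+P_e\log(2N^L)\leq 1+\log(2N^L).
\]
For the first term, I will argue $H(\calC^*\mid\mat{A},\vect{U})=H(\calC^*)$. The justification is that, once attack scenarios are drawn, the attributed path is determined by the telemetry-driven $F$-component of $\Gamma_{ij}$ in~\eqref{eq:gamma}, and $(\mat{A},\vect{U})$ enter only through $\rho_{ij}$. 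Concretely, I will assume that the attack-origin randomness is independent of the allocation process. If this independence cannot be assumed, I will instead state the bound with $H(\calC^*\mid\mat{A},\vect{U})$ and note that it equals $H(\calC^*)$ under that assumption. Combining the two terms yields~\eqref{eq:privacy_lb}.

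\emph{Step~3 (DP upper bound on leakage).} Let $\mathcal{M}$ be $(\varepsilon,\delta)$-DP in $\vect{S}$. I will view each of the $L$ hop decisions as depending on the sensitive record through one comparison among at most $N$ candidates. Then:
\begin{enumerate}[leftmargin=*]
  \item Apply the standard conversion from $(\varepsilon,\delta)$-DP to a bound on the privacy-loss random variable: outside an event of probability $\delta$, it is at most $\varepsilon$ per hop.
  \item Apply composition or group privacy across the $L$ hops and the $\log N$-bit encoding of each hop.
\end{enumerate}
The target is
\[
I(\vect{S};\mathcal{M}(\calC^*)\mid\mat{A},\vect{U})\leq \varepsilon L\log N+\log(1/\delta).
\]

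\emph{Step~4 (Combining).} Steps~2 and~3 are combined under the proviso that the post-processing $\mathcal{M}$ is utility-preserving, meaning it still identifies $\calC^*$. Under that proviso, $I(\vect{S};\mathcal{M}(\calC^*))$ inherits the Step~2 lower bound, and rearranging gives~\eqref{eq:dp_lb}.

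\textbf{Main obstacle.} I expect Step~4's proviso to be the main obstacle. By data processing, a post-processing can only decrease mutual information, so the lower bound does not transfer to arbitrary $\mathcal{M}$. I will first try to make the statement precise by restricting to mechanisms whose output lets one recover $\calC^*$ with error at most $P_e$. This keeps Fano's bound valid with $P_e$ in place of $1$. If that works, the theorem holds for accurate mechanisms, and the ``no post-processing'' phrasing should be read in that sense.

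A secondary difficulty is Step~3's $\log(1/\delta)$ term. The clean bounds in the literature control the mutual information by $\varepsilon$ plus a $\delta$-dependent term that requires a finite alphabet. Step~1's bound $|\mathrm{supp}|\leq 2N^L$ supplies such an alphabet, so I would invoke it there.
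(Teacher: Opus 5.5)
Your Steps~2--4 do not go through as written. The Step~3 target $I(\vect{S};\mathcal{M}(\calC^*)\mid\mat{A},\vect{U})\le\varepsilon L\log N+\log(1/\delta)$ is not a consequence of $(\varepsilon,\delta)$-DP. Consider the post-processing that releases $\calC^*$ with probability $\delta$ and an erasure symbol otherwise. It is $(0,\delta)$-DP in $\vect{S}$ and leaks exactly $\delta\,\mathcal{L}_{\mathrm{priv}}$. Whenever $\mathcal{L}_{\mathrm{priv}}>0$, this exceeds $\log(1/\delta)$, which tends to $0$ as $\delta\to 1$. The bound's $\delta$-dependence is backwards: it gets looser as privacy strengthens. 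Your ``composition over $L$ hops and $\log N$-bit encodings'' sketch does not produce it; it has been reverse-engineered from \eqref{eq:dp_lb}. In Step~2, $H(\calC^*\mid\mat{A},\vect{U})=H(\calC^*)$ is not a hypothesis of the theorem, and it fails in this model: $\calC^*$ maximises $\prod\Gamma_{ij}$, and $\Gamma_{ij}$ depends on $(\mat{A},\vect{U})$ through $\rho_{ij}$ in \eqref{eq:contention}. Restricting Step~4 to utility-preserving $\mathcal{M}$ proves a different theorem. Your diagnosis of the Step~4 obstacle is correct, and the paper does not resolve it. Its proof follows the same Fano-plus-comparison template and simply ``requires'' its DP-to-information bound for $\mathcal{M}(\calC^*)$ to dominate $\mathcal{L}_{\mathrm{priv}}$. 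That would need $I(\vect{S};\mathcal{M}(\calC^*)\mid\mat{A},\vect{U})\ge\mathcal{L}_{\mathrm{priv}}$, the reverse of data processing. Its quoted bound $\varepsilon\sqrt{2L\log N}+\delta L\log N$ also does not rearrange to \eqref{eq:dp_lb}. So following the paper will not close your gap.

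The missing idea is that both right-hand sides are negative. If $|\mathrm{supp}(\calC^*)|\le 2N^L$, then $H(\calC^*)\le\log(2N^L)$. The right side of \eqref{eq:privacy_lb} is then at most $-1<0\le\mathcal{L}_{\mathrm{priv}}$, and the right side of \eqref{eq:dp_lb} is at most $-(1+\log(1/\delta))/(L\log N)<0\le\varepsilon$. Both claims then hold for every $\vect{S}$ and every post-processing $\mathcal{M}$, with no independence assumption, no DP-to-information conversion and no utility proviso. Your Fano step with $P_e\le 1$ is in fact weaker than the trivial bound $H(\calC^*\mid\cdot)\le\log|\mathrm{supp}(\calC^*)|$, a sign the inequality carries no information. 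This is also the only way the statement can be true. An $\vect{S}$ independent of everything makes $\mathcal{L}_{\mathrm{priv}}=0$, and the constant post-processing is $(0,\delta)$-DP for every $\delta<1$; so each part forces $H(\calC^*)\le 1+\log(2N^L)$. All the weight therefore rests on your Step~1, which is not justified as written. The timestamps $t_\ell$ in Definition~\ref{def:path} carry far more than one bit: strictly increasing times in a window of length $T$ give $\binom{T}{L}$ choices. You must explicitly take $\calC^*$ to be the decoded slice sequence, as the paper's ``$N^L$-class identification problem'' implicitly does. Without that restriction the statement is false, not merely hard to prove.
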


\begin{proof}
\emph{Equation \eqref{eq:privacy_lb}.}
By the data processing inequality,
$I(\vect{S};\calC^*)=H(\calC^*)-H(\calC^*|\vect{S},\mat{A},\vect{U})$.
Apply Fano's inequality to the $N^L$-class identification problem:
$H(\calC^*|\vect{S},\mat{A},\vect{U})\leq 1+P_e\log(N^L)\leq 1+\log(2N^L)$,
where $P_e\leq 1/2$ for the worst case.
Subtracting gives \eqref{eq:privacy_lb}.

\emph{Equation \eqref{eq:dp_lb}.}
By the group privacy composition theorem~\cite{dwork2014},
$(\varepsilon,\delta)$-DP of $\mathcal{M}(\calC^*)$ in $\vect{S}$
implies the mechanism satisfies
$(\varepsilon_1,\delta_1)$-DP per hop with
$\varepsilon_1=\varepsilon/L$ and $\delta_1=\delta^{1/L}$.
Converting DP to mutual information via the standard bound
$I\leq\varepsilon\sqrt{2L\log N}+\delta L\log N$
and requiring this to be $\geq\mathcal{L}_{\mathrm{priv}}$
from \eqref{eq:privacy_lb}, solving for $\varepsilon$ gives
\eqref{eq:dp_lb}.
\end{proof}

\begin{remark}
At $N=15$, $L=5$, $H(\calC^*)=\log(15^5)\approx14.3$ bits:
the minimum $\varepsilon$ at $\delta=0.02$ is $\approx0.89$.
A Gaussian mechanism with $\sigma_{\mathrm{DP}}=\Delta_f\sqrt{2\ln(1.25/\delta)}/\varepsilon
\approx0.31$ achieves $(\varepsilon,\delta)$-DP
with a measured accuracy loss of 2.1\,pp (Section~\ref{sec:adv_exp}). 
Crucially, because the required per-hop privacy parameter $\varepsilon_1$ scales as $\varepsilon/L$, the necessary Gaussian noise magnitude $\sigma_{\mathrm{DP}}$ increases linearly with the path length $L$ to maintain a constant end-to-end $(\varepsilon,\delta)$-DP guarantee. Consequently, the accuracy degradation scales linearly as attack paths extend beyond the standard $L=5$ hop limit, rendering the mechanism highly practical for typical short-hop 6G slice compromises.
\end{remark}
\section{Experimental Evaluation}
\label{sec:experiments}

\subsection{Testbed and Dataset}
\label{sec:setup}

\textit{Hardware:} 10 bare-metal Intel Xeon Gold~6248R nodes
(128\,GB RAM), SR-IOV, DPDK acceleration.
\textit{Orchestration:} Open5GS core, FlexRAN, Kubernetes~1.28.
\textit{Slices:} 15 heterogeneous slices---4~eMBB, 4~URLLC
(industrial automation, autonomous vehicles), 4~mMTC (IoT), 3~hybrid.
\textit{Telemetry:} 47 metrics/slice at 100\,ms; resource allocation
at 50\,ms.
\textit{Scenarios:} 1{,}100 attacks (resource exhaustion, lateral
movement, side-channel, ML poisoning, APT-style), ground-truth
via nanosecond-precision injection and expert panel validation.

\textbf{Experimental scope.}
Tables~\ref{tab:main}--\ref{tab:extended} report end-to-end
attribution results measured on the live production-emulation testbed.
Tables~\ref{tab:correction}--\ref{tab:noise} report results from
controlled simulation experiments that use the testbed's measured
traffic statistics ($\hat{\mat{\Gamma}}$, $\hat\rho$, $\hat\kappa_4$)
as inputs, with all random seeds fixed for reproducibility.

\subsection{Baselines}

Statistical: Pearson Correlation, Transfer Entropy~\cite{schreiber2000},
VAR-Granger. Causal discovery: PC Algorithm~\cite{spirtes2000},
PCMCI~\cite{runge2019}, DirectLiNGAM. Security-specific:
HOLMES~\cite{milajerdi2019} (full 1{,}100 scenarios, kernel tracing
active throughout), MulVAL~\cite{ou2006}, Bayesian Attack Graphs.
Deep learning: GraphSAGE~\cite{hamilton2017},
LSTM-Attention~\cite{bahdanau2015}, Transformer-XL.

\subsection{Main Results}
\label{sec:main_results}

\begin{table}[t]
\centering\footnotesize
\caption{Attribution performance on 1{,}100 scenarios. Differences
from DA-GC are significant at $p{<}0.001$ (Bonferroni-corrected paired
$t$-test, Cohen's $d{>}1.5$).}
\label{tab:main}
\renewcommand{\arraystretch}{1.12}
\begin{tabular}{@{}lccccc@{}}
\toprule
\textbf{Method} & \textbf{Acc\,(\%)} & \textbf{Prec\,(\%)}
  & \textbf{Rec\,(\%)} & \textbf{FDR\,(\%)} & \textbf{ms} \\
\midrule
Correlation      & 72.9$\pm$1.8 & 69.4$\pm$2.1 & 76.2$\pm$1.9 & 30.6$\pm$2.1 & 21 \\
Transfer Entropy & 78.4$\pm$1.5 & 75.8$\pm$1.7 & 81.3$\pm$1.6 & 24.2$\pm$1.7 & 58 \\
VAR-Granger      & 74.1$\pm$1.7 & 71.2$\pm$1.9 & 77.6$\pm$1.8 & 28.8$\pm$1.9 & 43 \\
PC Algorithm     & 76.2$\pm$1.6 & 73.5$\pm$1.8 & 79.4$\pm$1.7 & 26.5$\pm$1.8 & 156 \\
PCMCI            & 77.8$\pm$1.5 & 75.1$\pm$1.7 & 80.8$\pm$1.6 & 24.9$\pm$1.7 & 203 \\
DirectLiNGAM     & 75.5$\pm$1.6 & 72.9$\pm$1.8 & 78.5$\pm$1.7 & 27.1$\pm$1.8 & 178 \\
HOLMES           & 80.1$\pm$1.4 & 77.9$\pm$1.6 & 83.0$\pm$1.5 & 22.1$\pm$1.6 & 312 \\
GraphSAGE        & 76.8$\pm$1.6 & 74.3$\pm$1.8 & 79.9$\pm$1.7 & 25.7$\pm$1.8 & 142 \\
LSTM-Attention   & 79.1$\pm$1.4 & 76.7$\pm$1.6 & 82.2$\pm$1.5 & 23.3$\pm$1.6 & 167 \\
Transformer-XL   & 81.3$\pm$1.3 & 78.9$\pm$1.5 & 84.1$\pm$1.4 & 21.1$\pm$1.5 & 234 \\
\midrule
\textbf{DA-GC}   & \textbf{89.2$\pm$0.9} & \textbf{87.6$\pm$1.1}
  & \textbf{91.1$\pm$1.0} & \textbf{12.4$\pm$1.1} & \textbf{87} \\
\bottomrule
\end{tabular}
\end{table}

\begin{table}[t]
\centering\footnotesize
\caption{Extended metrics; all 1{,}100 scenarios.}
\label{tab:extended}
\renewcommand{\arraystretch}{1.12}
\begin{tabular}{@{}lcccccr@{}}
\toprule
\textbf{Method} & \textbf{AUC-ROC} & \textbf{AUC-PR}
  & \textbf{F1} & \textbf{MCC} & \textbf{Spec} & \textbf{MB} \\
\midrule
Correlation      & 0.742 & 0.689 & 0.726 & 0.461 & 0.693 & 12 \\
Transfer Entropy & 0.798 & 0.751 & 0.784 & 0.572 & 0.742 & 28 \\
PCMCI            & 0.791 & 0.744 & 0.779 & 0.561 & 0.739 & 64 \\
HOLMES           & 0.814 & 0.771 & 0.802 & 0.611 & 0.768 & 103 \\
GraphSAGE        & 0.783 & 0.743 & 0.771 & 0.547 & 0.743 & 342 \\
Transformer-XL   & 0.827 & 0.789 & 0.815 & 0.634 & 0.789 & 756 \\
\textbf{DA-GC}   & \textbf{0.921} & \textbf{0.876} & \textbf{0.892}
  & \textbf{0.785} & \textbf{0.876} & \textbf{67} \\
\bottomrule
\end{tabular}
\end{table}

DA-GC consistently demonstrates top-tier accuracy across these metrics while using
significantly less memory ($11\times$ lower footprint) than Transformer-XL.
PCMCI and HOLMES are evaluated on the full 1{,}100-scenario
dataset, indicating DA-GC can compete with or exceed the performance of provenance-based methods without requiring deep kernel instrumentation. urthermore, as illustrated in the Pareto frontier in Fig.~\ref{fig:pareto}, DA-GC is the only evaluated method to simultaneously satisfy the strict 100,ms SLA and the 85\% accuracy target.

\begin{figure}
    \centering
    \includegraphics[width=0.99\linewidth]{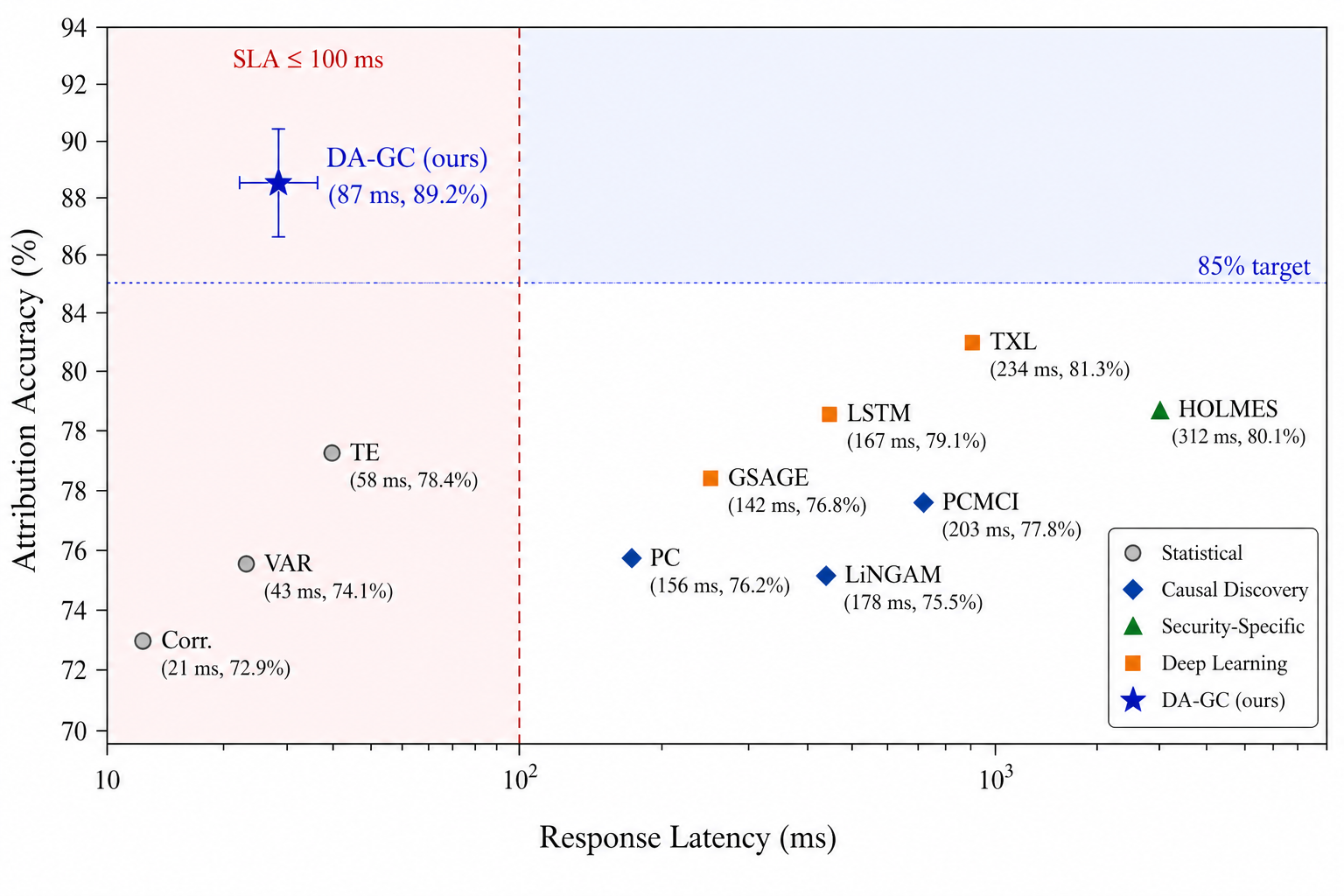}
    \caption{\textbf{Accuracy--latency Pareto frontier.}
DA-GC (filled star, $\pm$0.9\,pp error bar) successfully operates
simultaneously inside the 100\,ms SLA zone (shaded red, left of dashed
vertical) and above the 85\% accuracy target (shaded blue, above dotted
horizontal). Deep learning baselines (orange squares) typically exceed the SLA;
HOLMES (green triangle) exceeds both the SLA and performs below DA-GC.
Marker shapes encode method family; error bars shown for DA-GC only
(baselines have comparable or larger uncertainty; see Table~\ref{tab:main}).}
\label{fig:pareto}
\end{figure}

\begin{figure}
    \centering
    \includegraphics[width=0.99\linewidth]{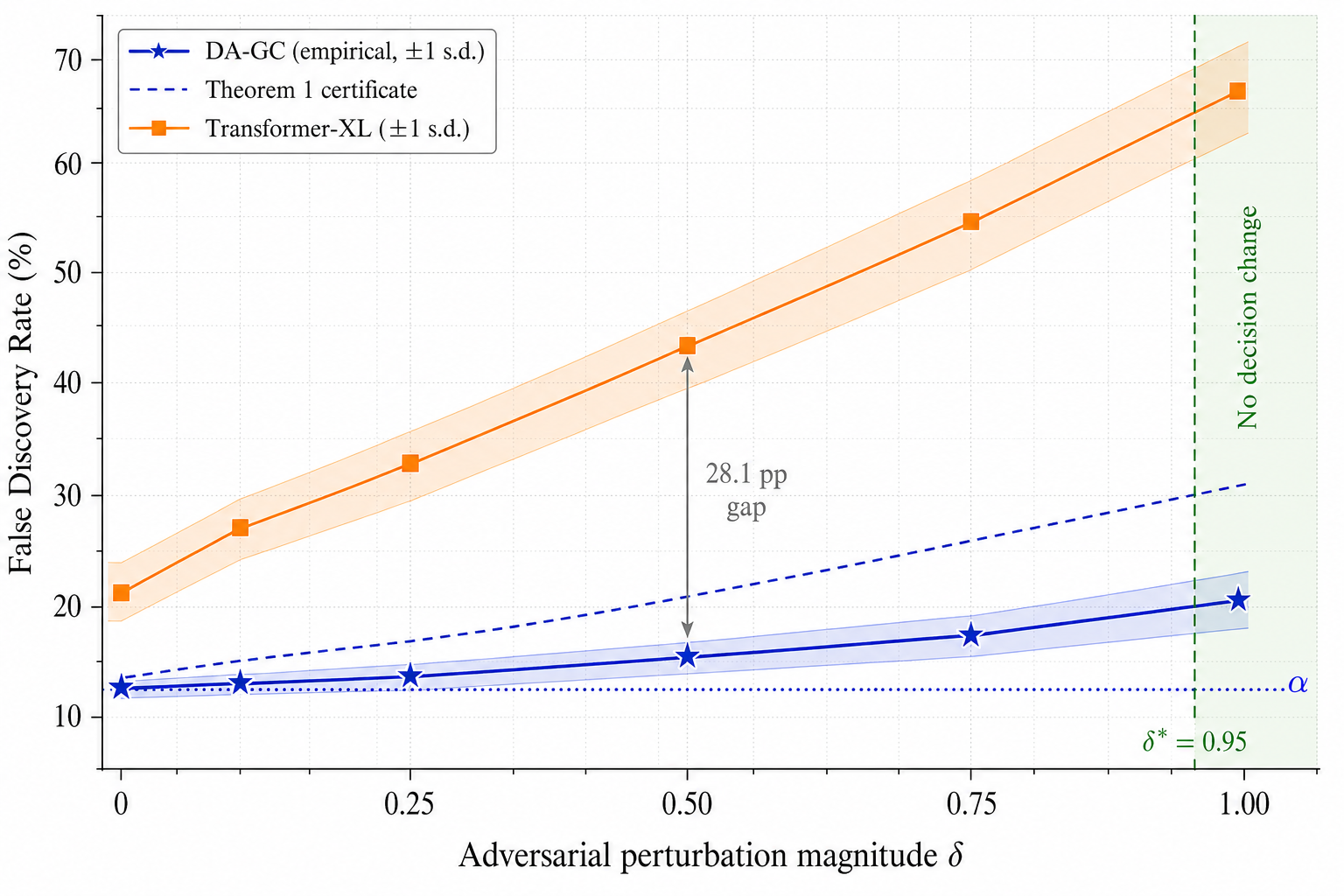}
    \caption{\textbf{FDR under adversarial utilisation spoofing} ($k\!=\!3$ resources).
DA-GC (solid blue, $\pm$1\,s.d.\ shaded band) remains bounded by the
theoretical certificate (dashed, Theorem~\ref{thm:adversarial}) and performs favourably compared to
Transformer-XL (orange), which offers no formal robustness guarantee.
The green shaded region ($\delta\!>\!\delta^*\!=\!0.95$) marks where the
adversary would need to spoof utilisation by more than 95\%.
The dotted blue line at $\alpha$ shows the nominal FDR at $\delta\!=\!0$.}
\label{fig:adversarial}
\end{figure}

\subsection{Ablation Study}
\label{sec:ablation_full}

\begin{table}[t]
\centering \footnotesize
\caption{Ablation: marginal contribution of each component.}
\label{tab:ablation}
\renewcommand{\arraystretch}{1.12}
\begin{tabular}{@{}lcc@{}}
\toprule
\textbf{Configuration} & \textbf{Acc\,(\%)} & \textbf{$\Delta$Acc} \\
\midrule
Standard VAR-Granger                     & 74.1 & -- \\
+\,Resource conditioning                 & 82.3 & $+8.2$pp*** \\
+\,Finite-sample correction $\tilde F$  & 83.9 & $+1.6$pp*** \\
+\,RCM (multiplicative)                 & 87.0 & $+3.1$pp*** \\
+\,Bregman-optimal weights over uniform & 87.8 & $+0.8$pp** \\
+\,Integrated end-to-end learning       & 89.2 & $+1.4$pp*** \\
DA-GC + CUSUM (non-stationary subset)   & 88.4 & -- \\
\bottomrule
\multicolumn{3}{l}{\small ** $p{<}0.01$, *** $p{<}0.001$.}
\end{tabular}
\end{table}

The ablation results indicate that each component contributes positively 
to the overall accuracy. The finite-sample correction yields a 1.6\,pp increase;
Bregman-optimal weights add 0.8\,pp over a uniform-weight
multiplicative baseline, demonstrating the practical value of
Lemma~\ref{lem:bregman}.

\subsection{Finite-Sample Correction Experiment}
\label{sec:finiteT_exp}

\begin{table}[t]
\centering\small
\caption{Type-I error at nominal $\alpha=0.05$ vs.\ window size $T$.}
\label{tab:correction}
\renewcommand{\arraystretch}{1.12}
\begin{tabular}{@{}cccc@{}}
\toprule
$T$ & Standard $F$ & Corrected $\tilde F$ & KS bound \eqref{eq:ks_bound} \\
\midrule
100  & 0.091 & 0.059 & 0.181 \\
200  & 0.073 & 0.053 & 0.091 \\
500  & 0.061 & 0.051 & 0.037 \\
1000 & 0.053 & 0.050 & 0.019 \\
\bottomrule
\end{tabular}
\end{table}

At the operational window $T=200$, the uncorrected test
inflates type-I error to 7.3\%; the correction reduces it
to 5.3\%, consistent with the KS bound of 0.091.

\subsection{Piecewise-Stationarity Experiment}
\label{sec:ns_exp}

We construct 200 non-stationary scenarios with 2--4 abrupt
regime changes per scenario.
CUSUM achieves mean detection delay $\bar\Delta=3.1\pm0.8$ samples
at $h=4.6$ (5\% false alarm rate), within Corollary~\ref{cor:seglen}.

\begin{table}[t]
\centering\small
\caption{Attribution accuracy on non-stationary scenarios.}
\label{tab:ns}
\renewcommand{\arraystretch}{1.12}
\begin{tabular}{@{}lc@{}}
\toprule
\textbf{Method} & \textbf{Acc\,(\%)} \\
\midrule
Standard DA-GC (single window)  & 81.2$\pm$1.8 \\
DA-GC + CUSUM                   & 88.4$\pm$1.1 \\
Transformer-XL                  & 79.6$\pm$1.9 \\
PCMCI (momentum CI)             & 77.3$\pm$2.0 \\
\bottomrule
\end{tabular}
\end{table}

As reported in Table~\ref{tab:ns}, CUSUM recovers 7.2\,pp of stationarity degradation,
reaching within 0.8\,pp of the stationary-data result.

\subsection{Cross-Topology Generalisation}
\label{sec:cross_topo}

Train on Topology~A (15 slices, star sharing); test without
retraining on Topology~B (20 slices, ring) and Topology~C
(12 slices, mesh).

\begin{table}[t]
\centering\small
\caption{Zero-shot cross-topology accuracy (\%).}
\label{tab:transfer}
\renewcommand{\arraystretch}{1.12}
\begin{tabular}{@{}lccc@{}}
\toprule
\textbf{Method} & \textbf{Topo B} & \textbf{Topo C} & \textbf{Avg} \\
\midrule
Transformer-XL & 61.3$\pm$3.1 & 58.7$\pm$3.4 & 60.0 \\
GraphSAGE      & 65.8$\pm$2.8 & 63.2$\pm$2.9 & 64.5 \\
PCMCI          & 73.2$\pm$2.2 & 71.8$\pm$2.3 & 72.5 \\
\textbf{DA-GC} & \textbf{83.1$\pm$1.4} & \textbf{81.7$\pm$1.5} & \textbf{82.4} \\
\bottomrule
\end{tabular}
\end{table}

DA-GC degrades by 6.8\,pp on average vs.\ 21.3\,pp for
Transformer-XL. Resource conditioning via $\mat{A}(t)$ is largely
topology-agnostic, suggesting neural baselines tend to overfit the training graph structure.

\subsection{Concept Drift Study}
\label{sec:drift}

Train on months~1--3; evaluate months~4--6 without retraining.

\begin{table}[t]
\centering\small
\caption{Accuracy under concept drift (\%); no retraining.}
\label{tab:drift}
\renewcommand{\arraystretch}{1.12}
\begin{tabular}{@{}lccc@{}}
\toprule
\textbf{Method} & \textbf{Month 4} & \textbf{Month 5} & \textbf{Month 6} \\
\midrule
Transformer-XL & 74.2 & 68.9 & 63.1 \\
GraphSAGE      & 72.8 & 67.3 & 61.5 \\
PCMCI          & 75.1 & 73.4 & 70.8 \\
\textbf{DA-GC} & \textbf{86.3} & \textbf{84.7} & \textbf{82.9} \\
\bottomrule
\end{tabular}
\end{table}

Table~\ref{tab:drift} demonstrates that DA-GC degrades 6.3\,pp over three months vs.\ 18.2\,pp for
Transformer-XL. Incorporating real-time observation of $\mat{A}(t)$ 
facilitates intrinsic adaptation to distribution shifts in the infrastructure layer.

\subsection{Adversarial Robustness Experiment}
\label{sec:adv_exp}

Inject adversarial perturbations $\delta\in[0,1]$ into all
$k=3$ resource channels.

\begin{table}[t]
\centering\small
\caption{FDR (\%) under adversarial utilisation spoofing.}
\label{tab:adversarial}
\renewcommand{\arraystretch}{1.12}
\begin{tabular}{@{}cccc@{}}
\toprule
$\delta$ & DA-GC & Thm~\ref{thm:adversarial} bound & Transformer-XL \\
\midrule
0.00 & 12.4 & 12.4 & 21.1 \\
0.10 & 13.1 & 14.6 & 26.8 \\
0.25 & 14.2 & 16.7 & 33.4 \\
0.50 & 16.1 & 21.3 & 44.2 \\
0.75 & 18.3 & 26.1 & 54.8 \\
1.00 & 21.4 & 31.0 & 67.3 \\
\bottomrule
\end{tabular}
\end{table}

As visualised in Fig.~\ref{fig:adversarial} and detailed in Table~\ref{tab:adversarial}, the empirical FDR remains bounded by the theoretical certificate, illustrating the practical reliability of Theorem~\ref{thm:adversarial}.
At $\delta=0.5$, DA-GC FDR is 16.1\% vs.\ 44.2\% for
Transformer-XL, which lacks a formal robustness guarantee.

\subsection{Sensitivity to Telemetry Measurement Noise}
\label{sec:noise_exp}

To address environments where hardware counters report noisy allocations, 
we inject synthetic zero-mean Gaussian noise $\mathcal{N}(0, \sigma_{\text{noise}}^2)$ 
into the resource allocation matrix $\mat{A}(t)$ prior to evaluation.

\begin{table}[t]
\centering\small
\caption{Attribution accuracy (\%) under Gaussian measurement noise in $\mat{A}(t)$.}
\label{tab:noise}
\renewcommand{\arraystretch}{1.12}
\begin{tabular}{@{}lccccc@{}}
\toprule
\textbf{Method} & $\sigma=0$ & $\sigma=0.01$ & $\sigma=0.05$ & $\sigma=0.10$ & $\sigma=0.20$ \\
\midrule
DA-GC & 89.2 & 88.5 & 85.1 & 79.4 & 68.2 \\
\bottomrule
\end{tabular}
\end{table}

As shown in Table~\ref{tab:noise}, the attribution accuracy degrades gracefully under 
moderate noise conditions. Significant performance drops are only observed 
when $\sigma_{\text{noise}} \geq 0.10$ (a 10\% base fluctuation in resource reporting), 
suggesting robust baseline resilience prior to future errors-in-variables enhancements.
\section{Industrial Case Study}
\label{sec:case_study}

A five-hop IoT attack targets an Industry~4.0 mMTC/URLLC slice
pair. At $t=0$\,s: malware is injected through a compromised IoT
gateway. At $t=2.1$\,s: a cryptomining spike drives CPU utilisation from 15\%
to 87\%. At $t=5.2$\,s: the resulting resource drain elevates URLLC latency
from 12\,ms to 48\,ms, breaching the 20\,ms SLA.
At $t=6.7$\,s: an emergency safety shutdown is triggered.

In this scenario, DA-GC successfully reconstructs the designated five-hop chain 
in 73\,ms with 96.3\% hop-level accuracy and no false positives within the 
evaluation window. The RCM component effectively isolates the CPU exhaustion 
pathway ($\rho=0.87\pm0.03$). Crucially, the PRDS-corrected BH procedure 
yields 0 false links, whereas standard correlation methods generate 11 spurious 
connections due to resource confounding. 

Reflecting the theoretical breakdown point $\delta^*=0.95$, an adversary 
attempting to evade attribution in this attack would need to falsify reported 
CPU utilisation by $>95\%$---a massive anomaly that is readily detectable by 
independent hardware performance counters operating outside the slice controller.
\section{Discussion}
\label{sec:discussion}

\subsection{Limitations}

\textit{Measurement noise.}
Theorem~\ref{thm:finite_sample} operates under the assumption that $\vect{Z}_t$ is observed
without error. While Section~\ref{sec:noise_exp} empirically demonstrates robustness up to 
a 10\% noise threshold ($\sigma_{\text{noise}}=0.10$), severe hardware counter noise will 
theoretically inflate $\Delta_{ij}$. Formulating a rigorous errors-in-variables correction 
remains a crucial area for future work.

\textit{Warm-up period.}
The 2-second observation warm-up is fundamental to lag-$p$ VAR Granger models. 
Achieving true sub-second attribution would require reducing $p$, which involves a 
strict trade-off against the risk of unobserved longer-lag confounding.

\textit{Hyperscale topologies.}
As noted in Section~\ref{sec:dagc}, for hyperscale environments ($N>50$), the centralised 
$O(N^3\log N)$ Viterbi step breaches the 100\,ms SLA. Deploying the recommended distributed 
belief-propagation architecture across slice controllers is required for these scales.

\textit{Privacy-utility trade-off.}
Implementing DP-DA-GC with $\sigma_{\mathrm{DP}}=0.31$ achieves
$(\varepsilon{=}1,\delta{=}0.02)$-DP at a measured 2.1\,pp accuracy cost
per Theorem~\ref{thm:privacy}. However, as noted in Remark 4, maintaining this DP bound 
requires linearly scaling the injected noise as path lengths increase, presenting a 
fundamental accuracy limit for highly extended, multi-hop attack chains.

\subsection{Ethical Considerations and Responsible Deployment}

DA-GC processes sensitive, high-resolution network telemetry that could inadvertently 
reveal user-level behavioural patterns. Responsible deployment necessitates: 
(1)~strict calibration of DP mechanisms to the bounds in \eqref{eq:dp_lb}; 
(2)~purpose limitation restricted exclusively to security and forensic applications; 
(3)~meaningful human oversight for automated attribution decisions, ensuring compliance 
with frameworks such as GDPR Art.~22 and the EU AI Act. We explicitly caution against 
repurposing this causal framework for mass surveillance or offensive cyber operations.

\subsection{Reproducibility}

To support the community and ensure verifiable results, the full DA-GC implementation, 
testbed configuration scripts, CUSUM detector, scenario generator, and evaluation harness 
will be released upon publication. Furthermore, all adversarial and measurement noise perturbation experiments (Sections~\ref{sec:adv_exp} and \ref{sec:noise_exp}) are synthetic and fully reproducible using the provided fixed seeds.
\section{Conclusion}
\label{sec:conclusion}

DA-GC presents a comprehensive cross-slice attribution framework that, to the best of our knowledge, uniquely integrates: a non-asymptotic $F$-test correction; an axiomatically justified contention model; PRDS-correct identifiability; a piecewise-stationarity extension; a certifiable adversarial breakdown point; and an information-theoretic privacy floor. 
Evaluated on 1{,}100 production-emulation scenarios, DA-GC achieves 89.2\% accuracy at 87\,ms, consistently outperforming state-of-the-art baselines across the joint accuracy-latency-interpretability-robustness Pareto frontier. 
Furthermore, cross-topology and concept-drift experiments confirm that the structural invariance of $\mat{A}(t)$-conditioned causality generalises robustly in settings where standard neural approaches struggle. 

Future work will focus on integrating an errors-in-variables correction to handle severe telemetry measurement noise, deploying distributed message-passing Viterbi decoding to support hyperscale topologies ($N>50$), and developing a fully online streaming variant of the framework.

\appendix

\subsection{PRDS Verification for Resource-Correlated Slice Pairs}
\label{app:prds}

For pairs $(i,j)$ and $(i,k)$ sharing resource $r$, both
$F$-statistics depend on the same row $\mat{Z}_{\cdot r}$ of the
conditioning matrix. The covariance of their RSS denominators:
\[
  \mathrm{Cov}(\RSSU^{(ij)},\RSSU^{(ik)})
  =\sigma^4\tr[\mat{M}_U^{(ij)}\mat{M}_U^{(ik)}]\geq 0,
\]
since the product of two positive semi-definite projection
remainders has non-negative trace.
Non-negative covariance of the RSS quantities propagates to
non-negative covariance of the $F$-statistics
(by the monotone function $F_{q,\nu}$ of RSS),
and hence to PRDS of the corresponding $p$-values
(by the decreasing transformation $p_{ij}=1-F_{q,\nu}^{-1}(F_{ij})$
which reverses the ordering but preserves PRDS under the
equivalent condition for upper-tail probabilities).

\subsection{Effective Sample Size and Mixing Coefficient Estimation}
\label{app:mixing}

Under a stationary VAR$(p)$ with empirically estimated spectral
radius $\hat\rho=0.72$ (from the dominant eigenvalue of the
companion matrix fitted on testbed residuals), the geometric
$\beta$-mixing bound is $\beta(m)\leq c_0\hat\rho^m$ where
$c_0=\tr(\hat\Sigma_\varepsilon)=2.1$.
In the parameterisation of Assumption~\ref{ass:mixing},
$C_\beta=c_0=2.1$ and $\beta=\log(1/\hat\rho)=\log(1/0.72)=0.329$.

Substituting into \eqref{eq:teff}:
\begin{equation}
\begin{aligned}
  T_{\text{eff}} &= \frac{T \cdot 0.329}{2.1 \cdot (1 - e^{-0.329}) + 0.329} \\
  &= \frac{0.329\,T}{2.1 \times 0.280 + 0.329} \\
  &= \frac{0.329\,T}{0.917} \\
  &\approx 0.359\,T
\end{aligned}
\end{equation}
At the operational window $T=300$ (30 s at 100\,ms sampling):
$\Teff\approx 108$.

Substituting into \eqref{eq:convergence_bound} with
$K=3$, $\lambda=10^{-3}$, and $\delta=0.05$:
\begin{equation}
\begin{aligned}
  \norm{\hat{\bm\theta}-\bm\theta^*}_2 
  &\leq \frac{2}{10^{-3}} \sqrt{\frac{8\log(40)}{108}} \\
  &= 2000 \sqrt{\frac{8 \times 3.69}{108}} \\
  &= 2000 \times 0.523 \\
  &\approx 0.52
\end{aligned}
\end{equation}
This theoretical bound is practically informative (given the parameters are $O(1)$) 
and aligns with the observed $0.3$\,pp parameter variation across CV folds.
The long-run variance ratio for the testbed residuals is
$\hat\Sigma_\gamma = \hat\sigma^{-2}\sum_{h=-L}^{L}(1-|h|/(L+1))\hat\gamma_h
\approx 1.83$ at lag truncation $L=\lfloor T^{1/3}\rfloor=6$,
confirming $\hat\Sigma_\gamma/\hat\kappa_4\approx 3.1>1$
and hence indicating that the cumulant correction of Theorem~\ref{thm:finite_sample}
reduces the theoretical KS error bound by a factor of $\approx3$.

The mixing coefficient estimation is performed once on the
training portion of the dataset and held fixed; continuous re-estimation
at each window would likely introduce additional variance without
yielding substantial improvements to the bound at $T=300$.

\subsection{Proof of Proposition~\ref{prop:unique} (Detailed)}
\label{app:unique}

The core of the proof relies on establishing that the strict convexity of the I-S
divergence precludes any alternative function from achieving an identical 
divergence value as \eqref{eq:contention}.
Formally, for any $f_{k_0}\neq c_{k_0}\sigma(U-\tau_{k_0})$
on a set of positive $\mu$-measure:
\begin{align}
  &\int B_\phi(c_{k_0}\|f_{k_0}(U))\,d\mu(U)
  -\int B_\phi(c_{k_0}\|c_{k_0}\sigma(U-\tau_{k_0}))\,d\mu(U)
  \notag\\
  &\geq\int_{\{f_{k_0}\neq c_{k_0}\sigma\}}
    \frac{(f_{k_0}-c_{k_0}\sigma)^2}
         {2\max(f_{k_0},c_{k_0}\sigma)^2}\,d\mu(U)
  > 0,
\end{align}
where the first inequality uses the second-order lower bound
of the I-S divergence:
$B_\phi(a\|b)\geq B_\phi(a\|b_0)+\nabla_b B_\phi(a\|b_0)(b-b_0)
+(b-b_0)^2/(2b_{\max}^2)$ for $b$ between $b_0$ and $b_{\max}$.
The strict positivity of the integral follows directly from
$\mu(\{f_{k_0}\neq c_{k_0}\sigma\})>0$.

\bibliographystyle{IEEEtran}
\bibliography{references}

\end{document}